\newtheorem{definition}{Definition}[section]
\newtheorem{theorem}{Theorem}
\newtheorem{lemma}{Lemma}[section]
\newtheorem{remark}{Remark}[section]
 \newtheorem{asm}{Assumption}
\newcommand{\1}{\mathbf{1}}
\title{Causal Feature Learning in the Social Sciences}
\author{
    Jingzhou Huang\thanks{Department of Economics and Quantitative Theory and Methods, Emory University.}  \and
    Jiuyao Lu\thanks{Department of Statistics and Data Science, The Wharton School, University of Pennsylvania.}  \and
    Alexander Williams Tolbert\thanks{Department of Quantitative Theory and Methods, Emory University.} 
}
\begin{document}

\maketitle

    \begin{abstract}
        Variable selection poses a significant challenge in causal modeling, particularly within the social sciences, where constructs often rely on inter-related factors such as age, socioeconomic status, gender, and race. Indeed, it has been argued that such attributes must be modeled as macro-level abstractions of lower-level manipulable features, in order to preserve the modularity assumption essential to causal inference (\cite{mossé2025modelingdiscriminationcausalabstraction}). This paper accordingly extends the theoretical framework of Causal Feature Learning (CFL). %, and resolving an open problem posed by \cite{chalupka2017causal}. % Theoretically, we demonstrate the existence of the needed macrostate constructions, and characterize the complexity of reasoning about them. 
        Empirically, we apply the CFL algorithm to diverse social science datasets, evaluating how CFL-derived macrostates compare with traditional microstates in downstream modeling tasks. %Finally, we test several methods for causal discovery on a simulated social network,  \textcolor{blue}{\dots}
    \end{abstract}

\section{Introduction}
\label{sec:introduction}

When an individual is discriminated against, their worse treatment is partly explained by the fact that they possess a protected attribute, such as race or gender (\cite{KlemThomsen2018}). Where explanation is understood causally, we arrive at a causal fairness constraint, which restricts causal pathways between individuals' protected attributes and their treatment \cite{loi2023would}. Causal algorithmic fairness criteria provide different ways of making this constraint precise (\cite{kilbertus2017avoiding,kusner2017counterfactual,barocas-hardt-narayanan}). For example, \textit{counterfactual fairness} compares the distribution of a predictor $\hat{Y}$, given an individual's features $x$ and their protected attribute $a$ (e.g. a race or gender), to the counterfactual distribution of the predictor, where that individual's protected attribute takes a different attribute $ a^\prime$  (\cite{kusner2017counterfactual}):\footnote{For comparison between metric fairness criteria (\cite{dwork2012fairness}) and causal criteria, see (\cite[pp.90-97]{Plecko2024}). For comparison between so-called ``statistical'' criteria and causal criteria, see \cite{glymour2019measuring,beigang2023reconciling}.} 
\begin{align}
    P(\text{do}(a)\, \hat{Y}=y\,|\, x, a) = P(\text{do}(a^\prime)\, \hat{Y}=y\,|\, x, a).\label{eq:counterfactual fairness}
\end{align}
Of course, estimating the quantities that feature in causal criteria, e.g. (\ref{eq:counterfactual fairness}), requires substantive assumptions, but this is a general feature of causal inference, and theorists have developed methods for handling uncertainty about the underlying causal structure (\cite{russell2017worlds}). A more basic problem for causal notions of fairness is to clarify the sense in which an individual's race, gender, or other protected attribute can \textit{cause} their worse treatment.

An initial challenge contended that protected attributes cannot cause outcomes, because they are not manipulable (cf. \cite{holland1986statistics}, p. 946; \cite{greiner2011causal}, pp. 1-2; \cite{glymour2014race}; \cite{sen2016race}, p. 504). However, the impossibility of intervening on protected attributes is not necessarily a problem (\cite{pearl2018does}). After all, it is impossible to intervene on people's smoking habits, but we can estimate the effects of smoking (\cite[p. 1268]{weinberger22}). A more recent challenge contends that protected attributes like race cannot be modeled as causes of worse treatment, because the constitutive relations between these attributes and individuals' other features violate the \textit{modularity assumption} essential to causal inference, which requires the ability to isolate causal effects (\cite{kohlerhausmann2018eddie, hu2020sex, hu2022causation, hu2024}).

In reply, theorists have suggested that modularity can be preserved, if the constitutive relations between race and other attributes are modeled using the framework of causal abstraction (\cite{mossé2025modelingdiscriminationcausalabstraction}), i.e. protected attributes like race are modeled as macrostates, which correspond systematically to the states of micro-variables that constitute them. While this answers the modularity worry in theory, the approach is empirically untested. Heterogeneity among micro-variables is a general obstacle to the construction of macrostates \cite{spirtes2004causal}, and theorists have given principled reasons for thinking that this problem may arise for macrostates corresponding to protected attributes (\cite{tolbert2024causal,tolbert2024restricted}). In sum, it is an open empirical question whether one can construct protected attributes like age, race, and gender as macrostates, in a way that preserves the causal profile of the corresponding micro-variables.

Variable selection poses a significant challenge in causal modeling, particularly within the social sciences, which rely heavily on causal inference (\cite[p. 650]{imbens2024causal}), and which often study complex, inter-related factors such as age, socioeconomic status, gender, and race. At the same time, the practical impossibility of experimentation often requires that researchers working in the social sciences rely on merely observational data (\cite[pp. 25-29]{gangl2010causal}), which in general greatly under-determines the underlying causal structure, as formalized in the so-called ``causal hierarchy theorems'' (\cite{ibeling2021topological,bareinboim2022pearl}).\footnote{We follow much of the algorithmic fairness literature in using Structural Causal Models. See \cite{ibeling2024comparing} for a partial equivalence between this framework and potential outcomes. See \cite{pearl2010} for a defense of Structural Causal Models in sociology.} Some have proposed that social scientists conjecture causal structure using qualitative data, for example by relying on ethnographies that describe how existing social practices produce outcomes of interest (\cite[pp. 67-70]{steel2004social}).

Other fields, including healthcare, genetics, and climatology, have turned to \textit{Causal Feature Learning} (CFL) (\cite{chalupka2014visual,chalupka2016unsupervised,chalupka2016multi,chalupka2017causal}), which addresses the variable selection challenge by identifying features that sustain causal relationships under various manipulations. CFL operates by partitioning the data into clusters of macrostates that encapsulate the essential causal dynamics among microstates. This allows the data to determine how variables are constructed, uncovering latent structures that predefined categories may overlook. Formally, CFL seeks to identify a partition of the microstates into macrostates such that for each macrostate, the causal effect on the outcome is preserved. That is, given a set of microstate causes $\mathcal{X}$ and microstate effects $\mathcal{Y}$, CFL aims to find a partition $\Pi(\mathcal{X})$ such that for any two microstates $x_1, x_2 \in \mathcal{X}$, if $x_1 \sim x_2$ under $\Pi(\mathcal{X})$, then for any macrostate $y \in \mathcal{Y}$:
\begin{align}
    P(Y \mid \operatorname{do}(x_1)) = P(Y \mid \operatorname{do}(x_2)).\label{eq:causal-equivalence}
\end{align}
This equivalence ensures that the macrostates derived from $\Pi(\mathcal{X})$ maintain the causal relationships necessary for accurate causal inference. However, CFL assumes that all variables are discrete, and is not directly applicable to many social scientific data sets. It is an open question how this theoretical framework can be extended to the continuous setting.% \cite[p. 156]{chalupka2017causal}.

This paper provides answers to these theoretical and empirical questions. On the theory side, we show that a simple binning technique immediately extends the Causal Coarsening Theorem (CCT) of \cite{chalupka2017causal} to the continuous setting:

\begin{theorem}[Extended CCT, informal] A partition of continuous variables based purely on observational data can be refined into the partition defined by (\ref{eq:causal-equivalence}); the subset of distributions for which this fails has Lebesgue measure zero.
\end{theorem}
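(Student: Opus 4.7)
The plan is to reduce the continuous case to the discrete setting through a nested sequence of binnings, invoke the original discrete CCT of \cite{chalupka2017causal} at each resolution, and then pass to the limit. Fix a countable nested sequence of finite measurable partitions $\{\mathcal{B}_n\}_{n \geq 1}$ of the cause space $\mathcal{X}$ whose bin diameters tend to zero and whose generated $\sigma$-algebras generate the Borel $\sigma$-algebra (for concreteness, dyadic cubes of side $2^{-n}$), and analogously a sequence $\{\mathcal{C}_n\}$ for the effect space $\mathcal{Y}$. For each $n$, let $X^{(n)}$ and $Y^{(n)}$ denote the bin labels of $X$ and $Y$ under $\mathcal{B}_n$ and $\mathcal{C}_n$.

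Next, I would apply the discrete CCT to each pair $(X^{(n)}, Y^{(n)})$. This yields, outside a Lebesgue null set $E_n$ in the parameter space of the $n$th discretized model, that the observational partition of $\mathcal{B}_n$ refines the causal partition at resolution $n$. Pulling each $E_n$ back through the smooth bin-probability map to the parameter space of the continuous joint density produces a null set $\widetilde{E}_n$, and the full exceptional set $\widetilde{E} = \bigcup_{n} \widetilde{E}_n$ remains Lebesgue null as a countable union of null sets.

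To lift the refinement from every finite resolution to the continuous equivalence relation in \eqref{eq:causal-equivalence}, I would argue as follows. Suppose $P(Y \mid X = x_1) = P(Y \mid X = x_2)$ at two microstates. By the martingale convergence theorem applied along $\mathcal{B}_n$ (equivalently, the Lebesgue differentiation theorem for conditional probabilities), for Lebesgue-almost every pair $(x_1, x_2)$ the discretized conditionals $P(Y^{(n)} \mid X \in B_n(x_i))$ converge weakly to $P(Y \mid X = x_i)$, so they agree up to $o(1)$ in any weak metric. Outside $\widetilde{E}$, the discrete CCT at resolution $n$ then forces the interventional conditionals $P(Y^{(n)} \mid \operatorname{do}(X \in B_n(x_i)))$ to agree up to $o(1)$ as well. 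Passing to the limit and using the consistency of the interventional distribution under coarsening yields $P(Y \mid \operatorname{do}(X = x_1)) = P(Y \mid \operatorname{do}(X = x_2))$, which is the desired refinement.

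The main obstacle is calibrating the notion of genericity across scales: the discrete CCT gives genericity in a finite-dimensional simplex of bin probabilities, whereas continuous densities live in an infinite-dimensional function space where Lebesgue measure is not intrinsically defined. My plan is to state the result relative to a fixed parametric family of joint densities on which the bin-probability map is smooth and submersive, so that each $E_n$ does pull back to a Lebesgue null set in the finite-dimensional parameter space. A secondary subtlety is that the discrete CCT requires a positivity/faithfulness condition on bin probabilities; verifying that this holds at every resolution outside a null set of the continuous parameter space, so that the countable-union argument survives, is where most of the real technical work lies.
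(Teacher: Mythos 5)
Your proposal is substantially more ambitious than what the paper actually proves, and the extra ambition is where it breaks. The paper's Theorem~\ref{thm:cct-extended} fixes a \emph{single} finite binning of $\mathcal{Y}$, redefines both the observational and causal partitions of $\mathcal{X}$ relative to that binning (Definition~\ref{def:partition-binning}), and then reruns the discrete polynomial-constraint argument verbatim: with $\alpha_{k,x,z}=P(a_k<Y\le a_{k+1}\mid x,z)$, $\beta_{x,z}$, $\gamma_z$ as coordinates, observational equivalence of $x_1,x_2$ imposes nontrivial polynomial equations, whose zero set is a measure-zero variety in the (still finite-dimensional) parameter space. No limit over resolutions is taken, $\mathcal{X}$ and $\mathcal{Z}$ remain discrete, and the un-binned relation~(\ref{eq:causal-equivalence}) is never recovered; the informal statement is a loose paraphrase of this binned result.

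The genuine gap in your route is the step where you invoke the discrete CCT at resolution $n$ to conclude that the interventional bin-conditionals ``agree up to $o(1)$.'' The CCT is an exact, non-quantitative statement: generically, \emph{exact} equality of observational conditionals implies \emph{exact} equality of causal conditionals. It has no approximate version, and the measure-zero genericity argument supplies no modulus of continuity relating observational closeness to causal closeness. Worse, its hypothesis is not even approximately satisfiable in your setup: even when $P(Y\mid X=x_1)=P(Y\mid X=x_2)$ pointwise, the bin-level conditionals $P(Y^{(n)}\mid X\in B_n(x_i))$ average over all microstates in each bin and will generically differ, so the bins containing $x_1$ and $x_2$ do not lie in the same observational class at any finite resolution, and the CCT gives you nothing to pass to the limit with. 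Your secondary concern is also real but is the lesser problem: the countable union $\bigcup_n \widetilde{E}_n$ only stays null if every bin-probability map is submersive on the chosen parametric family, which is an added hypothesis the paper avoids entirely by never leaving a single finite-dimensional simplex. If you want to salvage your approach, you would need an approximate coarsening statement (something in the spirit of the paper's Theorem~\ref{theorem:regular partitions}) rather than the exact CCT.
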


Because equation~\ref{eq:causal-equivalence} is a strict requirement, the partition guaranteed by the CCT may be trivial when each micro-state has a distinctive effect. Further, a natural question is whether non-trivial partitions exit, for which equation~\ref{eq:causal-equivalence} holds only approximately. We show that this is indeed the case:

\begin{theorem}[Regularity, informal]
    There always exist partition on micro-states $\Pi(\mathcal{X})$ and $\Pi(\mathcal{Y})$ for which the equation~\ref{eq:causal-equivalence} holds approximately.
\end{theorem}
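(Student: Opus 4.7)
The plan is to formalize ``approximately'' by fixing a metric on the space of interventional distributions — total variation is most convenient — and to call a pair of partitions $(\Pi(\mathcal{X}), \Pi(\mathcal{Y}))$ an $\epsilon$-approximate causal coarsening if for every $\Pi(\mathcal{X})$-cell $C$, every pair $x_1,x_2 \in C$, and every $\Pi(\mathcal{Y})$-cell $B$, we have $|P(Y \in B \mid \mathrm{do}(x_1)) - P(Y \in B \mid \mathrm{do}(x_2))| \leq \epsilon$. I will impose two regularity assumptions that are mild in the continuous social-science setting already in play: (i) $\mathcal{X}$ and $\mathcal{Y}$ are compact, and (ii) the map $x \mapsto P(Y \mid \mathrm{do}(x))$ is continuous in total variation. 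The second holds, for instance, whenever the structural equation $Y = f(X, U)$ has $f$ continuous in $x$ and the noise $U$ has a density in $y$ that varies continuously in $x$.

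First I would construct $\Pi(\mathcal{X})$ by a compactness-plus-uniform-continuity argument. On a compact domain, continuity in TV upgrades to uniform continuity, so for any $\epsilon > 0$ there is a $\delta > 0$ such that $\|x_1 - x_2\| < \delta$ implies $d_{\mathrm{TV}}(P(Y\mid\mathrm{do}(x_1)), P(Y\mid\mathrm{do}(x_2))) < \epsilon$. I would then cover $\mathcal{X}$ by finitely many balls of radius $\delta/2$ and refine this cover into a measurable partition (a dyadic grid of mesh below $\delta/\sqrt{d}$ suffices, where $d = \dim \mathcal{X}$). Every cell then has diameter below $\delta$, so the within-cell $\epsilon$-approximation holds. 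The partition is finite and non-trivial whenever $\delta$ is smaller than the diameter of $\mathcal{X}$, which is the only case where the theorem has content.

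Next I would build $\Pi(\mathcal{Y})$ by a similar dyadic partition of mesh $\delta_Y$, producing finitely many cells $B_1, \ldots, B_K$. Because TV dominates the discrepancy on any measurable event, the induced vector $\bigl(P(Y \in B_k \mid \mathrm{do}(x))\bigr)_{k=1}^K$ inherits the within-cell $\epsilon$-approximation from the previous step, regardless of the granularity of $\Pi(\mathcal{Y})$. Combining both partitions then yields the joint $\epsilon$-approximation required by the theorem. A short bookkeeping step, using monotonicity of probability under partition refinement, shows that any additional discretization error from the $\mathcal{Y}$-grid can be absorbed by initially halving the tolerance, so the two coarsenings can be chosen independently.

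The step I expect to be the main obstacle is verifying continuity of $x \mapsto P(Y\mid\mathrm{do}(x))$ in total variation, since TV is sensitive to singular parts and is not preserved under weak limits. In practice I would either (a) strengthen the regularity assumption to require joint continuity of the conditional density $p(y\mid\mathrm{do}(x))$ on $\mathcal{X}\times\mathcal{Y}$, which makes the required continuity automatic by dominated convergence, or (b) weaken the approximation metric to Wasserstein or L\'evy--Prokhorov distance, which behaves well under continuous structural equations but requires redoing the error-propagation step in the $\mathcal{Y}$-coarsening. Either route yields the theorem; the choice between them is essentially a modeling decision about what ``approximate'' should mean in the intended social-science applications.
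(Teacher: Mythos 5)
Your argument is sound under its own hypotheses, but it proves a different formalization of ``approximately'' by a genuinely different route than the paper. The paper's formal version (Theorem~\ref{theorem:regular partitions}) interprets approximation in the Szemer\'edi sense: for any subsets $A \subseteq W_x$, $B \subseteq W_y$ occupying at least an $\epsilon$-fraction of their cells, the \emph{average} of $P(y'|x')$ over $A \times B$ is within $\epsilon$ of the average over the whole cell pair. The proof is combinatorial: build a weighted bipartite graph on $\mathcal{X} \cup \mathcal{Y}$ with edge weights $w_{(x,y)} = P(y|x)$ and invoke the weighted regularity lemma of \cite{csaba2014weighted}, which delivers a bounded number of near-equal-sized cells at the price of a quasi-randomness hypothesis, a large-$n$ requirement, and the caveat that only all but $\epsilon \ell^2$ cell pairs are regular. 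You instead adopt the pointwise within-cell criterion (every pair $x_1, x_2$ in a cell has $\epsilon$-close interventional distributions), which is essentially the \cite{beckersApproximateCausalAbstraction2019a} notion that the paper explicitly mentions and then sets aside in favor of the subset-average relaxation. Your compactness-plus-uniform-continuity construction is valid and yields a stronger per-microstate guarantee with every cell pair ``good,'' but it buys this with a TV-continuity assumption that sits uneasily with the paper's own Assumption~\ref{assumption:smoothness} (which permits discontinuities at observational-state boundaries), and it gives no control over the number or balance of cells --- the cell count is dictated by the modulus of continuity and can be arbitrarily large, whereas the regularity-lemma route pins $\ell$ between $L$ and a constant $C_{\epsilon,L}$ independent of $n$. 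You also correctly identify TV-continuity of $x \mapsto P(Y \mid \operatorname{do}(x))$ as the fragile step; the paper avoids this entirely because its argument is purely combinatorial and never needs the interventional map to vary continuously. Both are defensible readings of the informal statement, but if you want to match the theorem the paper actually proves, you would need to work with subset averages and the quasi-randomness condition rather than pointwise bounds and continuity.
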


After making theoretical extensions to the framework of CFL, we apply the extended CFL algorithm to diverse social science datasets, evaluating how CFL-derived macrostates compare with traditional microstates in downstream modeling tasks.\footnote{See \href{https://github.com/Rorn001/causal-learning.git}{https://github.com/Rorn001/causal-learning.git} for replication code and data} We show that CFL reliably reduces dimensionality while detecting heterogeneity in the effects of macrostates.

\textbf{Roadmap:} Section~\ref{sec:preliminaries} introduces the definitions and notation that feature in the paper. Section~\ref{sec:CFL-discrete} introduces the framework of Causal Feature Learning and shows the above theorems. Section~\ref{sec:datasets} applies the CFL algorithm to several social scientific datasets. %Section~\ref{sec:discovery} applies several causal discovery methods to a simulated social network dataset.

\section{Preliminaries and Notation}
\label{sec:preliminaries}

We follow standard notation and terminology from the causal inference literature. Let $\mathcal{X}$ represent the set of \textbf{microstates}, with $x \in \mathcal{X}$ denoting the individual covariates or features of interest in a dataset. These microstates represent the most granular elements in the data, such as age, education, income, or other social indicators in the context of social sciences. Let $\mathcal{Y}$ represent the set of \textbf{outcomes}, with $y \in \mathcal{Y}$ denoting the observed effects or responses (e.g., voting behavior, income after a treatment, etc.).

Additionally, we assume the presence of treatment variables, where the intervention $\operatorname{do}(\mathcal{X}=x)$ sets the microstate $\mathcal{X}$ to a specific value $x$ through a treatment or policy intervention. Our primary goal is to develop methodologies that construct \textbf{macrostates} from the microstates. A \textbf{macrostate} is a coarser, aggregated version of the microstates, grouped in such a way that the underlying causal structure is preserved. Formally, a macrostate is a partition $\Pi(\mathcal{X})$ of the set of microstates into clusters, where each cluster summarizes multiple microstates while maintaining their causal effect on the outcome variable $\mathcal{Y}$.

In this paper, we extend the Causal Feature Learning (CFL) framework to automatically construct these macrostates, particularly when dealing with both discrete and continuous microstates. The goal is to find partitions of $\mathcal{X}$ into macrostates that preserve the causal relationships necessary for accurate causal inference, specifically ensuring that for any two microstates $x_1$ and $x_2$ in the same partition, their causal effect on microstate effect $\mathcal{Y}$ remains equivalent under manipulation:
\[
P(Y \mid \operatorname{do}(x_1)) = P(Y \mid \operatorname{do}(x_2)).
\]
This ensures that the derived macrostates are both informative and causally consistent with the microstates they summarize.

\section{Causal Feature Learning}
\label{sec:CFL-discrete}

This section provides an overview of the CFL framework, and shows Theorem~1.

\subsection{Preliminaries and Assumptions}
\label{subsec:preliminaries}

\begin{definition}[Microstates]
\label{def:micro}
% Let $\mathcal{X}$ denote the set of microstate causes and $\mathcal{Y}$ the set of microstate effects, with $x \in \mathcal{X}$ and $y \in \mathcal{Y}$. The variable $\mathcal{Z}$ denotes a confounder influencing both $\mathcal{X}$ and $\mathcal{Y}$.
Let $\mathcal{X}$ denote the set of microstate causes and $\mathcal{Y}$ the set of microstate effects. The variable $z$ denotes a confounder influencing both $x \in \mathcal{X}$ and $y \in \mathcal{Y}$.
\end{definition}

\begin{asm}[Discrete Macrostates]
\label{assumption:discrete}
All macrostates are discretized into a finite number of states, denoted by $\mathcal{I}$. This ensures a finite state space, facilitating the application of clustering methods for data partitioning.
\end{asm}

\begin{asm}[Smoothness]
\label{assumption:smoothness}
The conditional distribution $P(\mathcal{Y} \mid \mathcal{X})$ may exhibit discontinuities at the boundaries of observational states. However, a continuous density function $f: \mathcal{X} \times \mathcal{Y} \to \mathbb{R}$ exists such that $P(y \mid x) = f(x, y)$, where $f$ is piecewise continuous.
\end{asm}

\begin{definition}[Partitions]
\label{def:partitions}
The observational partition $\Pi_o(\mathcal{X})$ with respect to $\mathcal{Y}$ is induced by the equivalence relation:
\[
x_1 \sim x_2 \iff \text{for all } y \in \mathcal{Y}, \ P\left(Y \mid x_1\right) = P\left(Y \mid x_2\right).
\]
The fundamental causal partition $\Pi_c(\mathcal{X})$ with respect to $\mathcal{Y}$ is induced by the equivalence relation:
\[
x_1 \sim x_2 \iff \text{for all } y \in \mathcal{Y}, \ P\left(Y \mid \operatorname{do}\left(x_1\right)\right) = P\left(Y \mid \operatorname{do}\left(x_2\right)\right).
\]
The confounding partition $\Pi_{P(X \mid Z)}(\mathcal{X})$ with respect to $\mathcal{Z}$ is induced by the equivalence relation:
\[
x_1 \sim x_2 \iff \text{for all } z\in\mathcal{Z},  \ P\left(x_1 \mid z\right) = P\left(x_2 \mid z\right) .
\]
% Analogously, the observational partition $\Pi_o(\mathcal{Y})$ is defined by:
% \[
% y_1 \sim Y_2 \iff \forall x \in \mathcal{X}, \ P\left(y_1 \mid X\right) = P\left(Y_2 \mid X\right).
% \]
\end{definition}

%\begin{definition}[Microstate Manipulation]
%\label{def:micro-manipulation}
%Given an intervention $\operatorname{do}(X=x)$, the conditional distribution of $\mathcal{Y}$ is expressed as:
%\[
%P(Y \mid \operatorname{do}(X=x)) = \sum_z P(Y \mid X=x, Z=z) P(Z=z).
%\]
%\end{definition}

% \begin{definition}[Causal Partition]
% \label{def:causal-partition}
% The fundamental causal partition $\Pi_c(\mathcal{X})$ is induced by the equivalence relation:
% \[
% x_1 \sim x_2 \iff \forall y \in \mathcal{Y}, \ P\left(Y \mid \operatorname{do}\left(x_1\right)\right) = P\left(Y \mid \operatorname{do}\left(x_2\right)\right).
% \]
% \end{definition}

\begin{definition}[Macrostate Manipulation]
\label{def:macro-manipulation}
The operation $do(I=i)$ on a macrostate $I$ is defined by manipulating the underlying microstate to a value $x_k$ such that $X(x_k)=i$. Formally, $do(I=i)$ is expressed as $do(X(x_k)=i)$ for $I=i$. 
\end{definition}

%\begin{definition}[Confounding Partition]
%\label{def:confounding-partition}
%The confounding partition $\Pi_{P(X \mid Z)}(\mathcal{X})$ with respect to $\mathcal{Y}$ is induced by the equivalence relation:
%\[
%x_1 \sim x_2 \iff \text{for all } Z,  \ P\left(x_1 \mid Z\right) = P\left(x_2 \mid Z\right) .
%\]
%\end{definition}

\subsection{Causal Coarsening Theorem (CCT)}
\label{sec:CCT-discrete}

\cite{chalupka2017causal} prove that the causal partition will (almost always) be a coarsening of the observational partition, which justifies applying clustering algorithm to observed data to uncover the underlying causal structure.

\begin{theorem}[Causal Coarsening Theorem]%[\cite{chalupka2017causal}]
\label{thm:cct}
Consider the set of joint distributions $P(X, Y, Z)$ that induce a fixed causal partition $\Pi_c(\mathcal{X})$ and a fixed confounding partition $\Pi_{P(X \mid Z)}(\mathcal{X})$. The subset of distributions where $\Pi_c(\mathcal{X})$ is not a coarsening of $\Pi_o(\mathcal{X})$ has Lebesgue measure zero.
% The following statements hold:
% \begin{enumerate}
%     \item The subset of distributions where $\Pi_c(\mathcal{X})$ is not a coarsening of $\Pi_o(\mathcal{X})$ has Lebesgue measure zero.
%     \item The subset of distributions where $\Pi_c(\mathcal{Y})$ is not a coarsening of $\Pi_o(\mathcal{Y})$ has Lebesgue measure zero.
% \end{enumerate}
\end{theorem}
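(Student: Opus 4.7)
The plan is to show that $V_{\text{bad}}$, the set of joint distributions on which $\Pi_c$ fails to coarsen $\Pi_o$, is contained in a finite union of proper algebraic subvarieties of the constraint manifold and therefore has Lebesgue measure zero. First I would parametrize joint distributions $P(X,Y,Z)$ over the discrete state spaces by their probability masses $P(x,y,z)$, subject to nonnegativity and normalization. The hypotheses that $P$ induces the fixed causal partition $\Pi_c$ and the fixed confounding partition $\Pi_{P(X \mid Z)}$ impose a finite list of polynomial equalities (for same-cell pairs) and strict polynomial inequalities (for different-cell pairs), cutting out a semi-algebraic subset $V$ of the probability simplex in the ambient Euclidean parameter space.

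Next I would reduce the claim to a polynomial non-identity statement. The coarsening condition fails exactly when there exist $x_1,x_2 \in \mathcal{X}$ with $x_1 \not\sim_c x_2$ but $P(y \mid x_1) = P(y \mid x_2)$ for every $y$; after clearing denominators, this is the vanishing of the polynomials $q_{x_1, x_2, y}(P) := P(y, x_1) P(x_2) - P(y, x_2) P(x_1)$. Hence $V_{\text{bad}}$ is a finite union over different-cell pairs $(x_1, x_2)$ of the common zero sets $\{q_{x_1,x_2,y} = 0 : y \in \mathcal{Y}\} \cap V$. It therefore suffices to show that for each such pair, $\{q_{x_1,x_2,y}\}_y$ is not identically zero on $V$, i.e., that at least one distribution in $V$ satisfies $P(\cdot \mid x_1) \neq P(\cdot \mid x_2)$.

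The main step is a perturbation argument. Starting from any $P \in V$, I would perturb only the conditional $P(y \mid x_1, z)$ by an increment $\Delta(y,z)$, holding the marginal $P(X,Z)$ fixed. Fixing $P(X,Z)$ automatically preserves the confounding partition $\Pi_{P(X \mid Z)}$ and the causal effects $P(Y \mid \operatorname{do}(x))$ for every $x \neq x_1$. The back-door adjustment $P(Y \mid \operatorname{do}(x_1)) = \sum_z P(Y \mid x_1,z) P(z)$ shows that the causal effect at $x_1$ is also preserved iff $\sum_z \Delta(y,z) P(z) = 0$ for each $y$, while the observational conditional changes by $\sum_z \Delta(y,z) P(z \mid x_1)$. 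These two linear functionals of $\Delta(y,\cdot)$ are proportional only when $P(z \mid x_1) \propto P(z)$, equivalently when $P(x_1 \mid z)$ is constant in $z$. Excluding the degenerate sub-case in which $X \perp Z$ (where $P(y \mid x) = P(y \mid \operatorname{do}(x))$ identically and hence $\Pi_o = \Pi_c$, making the coarsening claim trivial), one can choose $\Delta$ in the kernel of the first functional but outside the kernel of the second, and scale it small enough to keep $P + \Delta$ inside $V$.

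Since the zero set of a nonzero polynomial has Lebesgue measure zero in the ambient parameter space and a finite union of null sets is null, this yields $V_{\text{bad}}$ null, completing the argument. The step I expect to be the main obstacle is ensuring the perturbation actually lands in $V$: one must simultaneously respect the polynomial equalities fixing $\Pi_c$ (handled by the back-door constraint above) and the strict inequalities encoding the different-cell relations defining $\Pi_c$ and $\Pi_{P(X \mid Z)}$ (handled by continuity once $\Delta$ is small). Dealing cleanly with the degenerate unconfounded case, and verifying that the partition-inducing inequalities form an open condition preserved under small perturbations, is where most of the care is required.
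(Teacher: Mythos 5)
Your proposal is correct in outline and reaches the same high-level conclusion by the same general strategy (the bad set is the zero locus of non-trivial polynomials, hence Lebesgue-null), but the key non-degeneracy step goes in a genuinely different direction from the paper's. The paper (following Chalupka et al.) parametrizes by $\gamma_z = P(z)$, $\beta_{x,z} = P(x\mid z)$, $\alpha_{\cdot,x,z} = P(\cdot\mid x,z)$, observes that fixing $(\alpha,\beta)$ fixes both partitions, clears denominators in $P(y\mid x_1)=P(y\mid x_2)$ to get a quadratic form in $\gamma$, and witnesses non-triviality by exhibiting an explicit $\gamma$ (uniform, then a two-coordinate perturbation of it) that violates the constraint. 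You instead hold the $(X,Z)$-marginal and all interventional distributions fixed and perturb the conditional $P(Y\mid x_1,Z)$ inside the kernel of the back-door functional $\delta\mapsto\sum_z\delta(z)P(z)$ but outside the kernel of the observational functional $\delta\mapsto\sum_z\delta(z)P(z\mid x_1)$. Your route has the advantage of starting from an arbitrary point of the constraint set and making transparent exactly when the observational and interventional conditionals decouple (precisely when $P(\cdot\mid x_1)\neq P(\cdot)$); the paper's route has the advantage that the slice being varied ($\gamma$, for fixed $(\alpha,\beta)$) is an entire simplex, so the Fubini step from ``non-trivial polynomial on each slice'' to ``measure zero overall'' is immediate. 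Both arguments share the same informality about which reference measure is meant on the lower-dimensional constraint set, so I do not count that against you.

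One small imprecision to fix: the degenerate case in your perturbation step is not ``$X\perp Z$'' but the weaker condition that $P(x_1\mid z)$ is constant in $z$ for the \emph{specific} $x_1$ being perturbed. If that holds but $P(x_2\mid z)$ is not constant, simply perturb at $x_2$ instead; if it holds for both, then $P(y\mid x_i)=P(y\mid\operatorname{do}(x_i))$ for $i=1,2$, so observational equality forces causal equality and the pair can never witness a coarsening failure. With that case split added (and the routine check that the perturbation respects nonnegativity and the strict different-cell inequalities, which you already flag), the argument goes through.
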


% \begin{proof}[Proof Sketch]
% \label{proof:sketch}
% Define the following probabilities:
% \begin{align*}
% \alpha_{y, x, z} & \triangleq P(Y=y \mid X=x, Z=z), \\
% \beta_{x, z} & \triangleq P(X=x \mid Z=z), \\
% \gamma_z & \triangleq P(Z=z).
% \end{align*}
% Fixing $\beta_{x, z}$ directly fixes $\Pi_{P(X \mid Z)}(\mathcal{X})$, and fixing $\alpha_{y, x, z}$ directly fixes $\Pi_{P(Y \mid X, Z)}(\mathcal{Y})$, effectively fixing $\Pi_c(\mathcal{X})$.

% The conditions $O(x_1) = O(x_2)$ or $O(j_1) = O(j_2)$ impose polynomial constraints on the joint distribution $P[\gamma ; \alpha, \beta]$. These constraints are non-trivial, as demonstrated by constructing specific examples where they do not hold. Therefore, the subset of $P[\gamma ; \alpha, \beta]$ violating the theorem's conditions is of Lebesgue measure zero. Consequently, the subset of joint distributions $P(\mathcal{X}, \mathcal{Y}, \mathcal{Z}$ that violate the theorem is also of measure zero.
% \end{proof}

\begin{remark}
    Analogous to partitioning $\mathcal{X}$, \cite{chalupka2017causal} define the observational partition $\Pi_o(\mathcal{Y})$ and causal partition $\Pi_c(\mathcal{Y})$ of $\mathcal{Y}$. They used the same teniniques stated above to prove the following statement: The subset of distributions where $\Pi_c(\mathcal{Y})$ is not a coarsening of $\Pi_o(\mathcal{Y})$ has Lebesgue measure zero.
\end{remark}

Instead of considering point mass of the conditional distribution $Y \mid X$, we consider the probability of $\mathcal{Y}$ falling into certain interval while conditioning on $\mathcal{X}$. More formally, we discretize $\mathcal{Y}$ into $m$ bins of the form $(a_k,a_{k+1}]$, where $k=1,2,\dots,m$. Such discretization induces the observational partition and causal partition for continuous variables.

\begin{definition}[Partitions with Binning]
\label{def:partition-binning}
For continuous macrostates, the observational partition $\Pi_o(\mathcal{X})$ is induced by the equivalence relation:
\begin{align*}
x_1 \sim x_2 \iff  &\text{ for all bins } (a_k, a_{k+1}],\\  &P\left(a_k < \mathcal{Y} \le a_{k+1} \mid x_1\right) = P\left(a_k < \mathcal{Y} \le a_{k+1} \mid x_2\right).
\end{align*}
Analogously, the causal partition $\Pi_c(\mathcal{X})$ with binning is induced by the equivalence relation:
\begin{align*}
x_1 \sim x_2 \iff  &\text{ for all bins } (a_k, a_{k+1}],\\& \ P\left(a_k < \mathcal{Y} \le a_{k+1} \mid \operatorname{do}(x_1)\right) = P\left(a_k < \mathcal{Y} \le a_{k+1} \mid \operatorname{do}(x_2)\right).
\end{align*}
% \begin{align*}
% & x_1 \sim x_2 \iff \forall \text{ bin } (a_k, a_{k+1}], \\
% &\hspace{3.5em} P\left(a_k < y \le a_{k+1} \mid x_1\right) = P\left(a_k < y \le a_{k+1} \mid x_2\right).
% \end{align*}
% Similarly, the observational partition $\Pi_o(\mathcal{Y})$ for the effect variable $\mathcal{Y}$ is defined by:
% \begin{align*}
% & (a_k,a_{k+1}] \sim (a_t,b_t] \iff \forall x \in \mathcal{X}, \\ 
% &\hspace{3.5em} P\left(a_k < y \le a_{k+1} \mid X\right) = P\left(a_t < y \le b_t \mid X\right).
% \end{align*}
\end{definition}

%\begin{definition}[Causal Partition with Binning]
%\label{def:causal-partition-binning}
%Analogously, the causal partition $\Pi_c(\mathcal{X})$ with binning is induced by:
%\begin{align*}
%x_1 \sim x_2 \iff \forall \text{ bin } (a_k, a_{k+1}], \ P\left(a_k < y \le a_{k+1} \mid \operatorname{do}(x_1)\right) = P\left(a_k < y \le a_{k+1} \mid \operatorname{do}(x_2)\right).
%\end{align*}
%\end{definition}
The binning process involves partitioning continuous variables into discrete intervals, allowing CFL to apply in scenarios where variables are not naturally categorical. The choice of the number of bins $m$ and the binning thresholds $(a_k, a_{k+1}]$ can be guided by statistical methods such as equal-width binning, equal-frequency binning, or more sophisticated techniques like entropy-based binning to optimize the preservation of causal relationships. With the binning technique, the Causal Coarsening Theorem extends to continuous variables straightaway:

\begin{theorem}[Extended Causal Coarsening Theorem]
\label{thm:cct-extended}
Consider the set of joint distributions $P(X,Y,Z)$ that induce a fixed causal partition $\Pi_c(\mathcal{X})$ and a fixed confounding partition $\Pi_{P(X \mid Z)}(\mathcal{X})$. Under the binning technique defined in Definition \ref{def:partition-binning}, the subset of distributions where $\Pi_c(\mathcal{X})$ is not a coarsening of $\Pi_o(\mathcal{X})$ has Lebesgue measure zero.
% the following statements hold:
% \begin{enumerate}
%     \item The subset of distributions where $\Pi_c(\mathcal{X})$ is not a coarsening of $\Pi_o(\mathcal{X})$ has Lebesgue measure zero.
%     \item The subset of distributions where $\Pi_c(\mathcal{Y})$ is not a coarsening of $\Pi_o(\mathcal{Y})$ has Lebesgue measure zero.
% \end{enumerate}
\end{theorem}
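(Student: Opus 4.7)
The plan is to reduce the extended statement directly to the original Causal Coarsening Theorem (Theorem~\ref{thm:cct}) by observing that binning $\mathcal{Y}$ turns the problem into one involving an effectively discrete outcome variable. First I would define the binned variable $\tilde{Y}$ taking value $k$ whenever $Y \in (a_k, a_{k+1}]$, so that $\tilde{Y}$ is finite-valued in $\{1,\dots,m\}$. Comparing Definition~\ref{def:partitions} with Definition~\ref{def:partition-binning}, the observational partition of $\mathcal{X}$ under binning (with respect to $Y$) coincides with the ordinary observational partition of $\mathcal{X}$ with respect to $\tilde{Y}$, since $P(a_k < Y \leq a_{k+1} \mid x) = P(\tilde{Y} = k \mid x)$; the analogous identity under $\operatorname{do}(x)$ shows that the causal partitions also agree, and the confounding partition $\Pi_{P(X \mid Z)}(\mathcal{X})$ is insensitive to the binning of $Y$.

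Next I would apply Theorem~\ref{thm:cct} to the joint distribution of $(X, \tilde{Y}, Z)$, whose $\tilde{Y}$-component is discrete by construction. This directly yields that, for a fixed causal partition $\Pi_c(\mathcal{X})$ and fixed confounding partition $\Pi_{P(X \mid Z)}(\mathcal{X})$, the set of induced binned joint distributions for which $\Pi_c(\mathcal{X})$ fails to be a coarsening of $\Pi_o(\mathcal{X})$ has Lebesgue measure zero in the finite-dimensional simplex parameterizing such binned distributions.

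The remaining task is to lift this null-set conclusion from the binned simplex back to the original space of joint distributions $P(X,Y,Z)$. The projection $P(X,Y,Z) \mapsto P(X,\tilde{Y},Z)$ is linear, and any density satisfying Assumption~\ref{assumption:smoothness} factorizes on each cell $\mathcal{X} \times (a_k, a_{k+1}] \times \mathcal{Z}$ as a bin-mass times a within-bin conditional density. This product structure lets Fubini's theorem translate ``measure zero in the binned simplex'' into ``measure zero in the original parameterization'' on any finite-dimensional slice through the bin-mass coordinates: the within-bin conditional densities are irrelevant to both $\Pi_o(\mathcal{X})$ and $\Pi_c(\mathcal{X})$ under binning, and so integrate out without contributing to the bad set.

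The main obstacle, I expect, is precisely this last measure-theoretic step: the space of joint distributions of continuous variables is infinite-dimensional, so ``Lebesgue measure zero'' must be interpreted relative to a finite-dimensional parameterization, exactly as in the original discrete CCT. Once the Fubini decomposition into bin masses and within-bin conditional densities is made explicit, the argument reduces to bookkeeping, but it is what must be written out to justify calling the extension ``straightaway''.
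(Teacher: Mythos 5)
Your proposal is correct, and its mathematical core coincides with the paper's: both arguments ultimately rest on the observation that the bin probabilities $P(a_k < Y \le a_{k+1}\mid x,z)$ form a finite parameter set on which the observational-equivalence conditions become non-trivial polynomial constraints, hence cut out a Lebesgue-null variety. The difference is in packaging. The paper does not reduce to Theorem~\ref{thm:cct} as a black box; it re-runs the Chalupka-style derivation from scratch, defining $\alpha_{k,x,z} \triangleq P(a_k < y \le a_{k+1}\mid x,z)$, $\beta_{x,z}$, $\gamma_z$, expanding $P(a_k<y\le a_{k+1}\mid x)$ by total probability over $z$, clearing denominators to get the polynomial system, and exhibiting an explicit perturbation of $\gamma$ that violates the constraints to certify non-triviality. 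Your reduction via the discrete surrogate $\tilde Y$ is more modular and gets that non-triviality for free from the original theorem, at the cost of having to justify the transfer; the paper's re-derivation is self-contained but duplicates work. Your third step is where you go beyond the paper rather than against it: the paper's proof silently identifies ``the space of joint distributions $P(X,Y,Z)$'' with the finite-dimensional simplex of parameters $(\gamma,\beta,\alpha)$ and asserts in one sentence that measure zero there implies measure zero for the original distributions, with no Fubini-type argument over the within-bin conditional densities. Your explicit flagging of this lifting step, and the observation that the within-bin densities are irrelevant to both partitions and integrate out, is a legitimate tightening of an argument the paper leaves implicit; the honest reading of both proofs is that the measure-zero claim is relative to the finite-dimensional binned parameterization, exactly as in the discrete CCT.
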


The full proof is relegated to Appendix \ref{app:proof-sec-CFL-continuous}. Here we include a sketch: 

\begin{proof}[Proof Sketch]
\label{proof:sketch}
Define the following probabilities:
\begin{align*}
\alpha_{k, x, z} & \triangleq P(\alpha_k < \mathcal{Y} \le \alpha_{k+1} \mid x, z), \\
\beta_{x, z} & \triangleq P(x \mid z), \\
\gamma_z & \triangleq P(z).
\end{align*}
Fixing $\beta_{x, z}$ directly fixes $\Pi_{P(X \mid Z)}(\mathcal{X})$, and fixing $\alpha_{k, x, z}$ directly fixes $\Pi_{P(Y \mid X, Z)}(\mathcal{Y})$, effectively fixing $\Pi_c(\mathcal{X})$.

The conditions $O(x_1) = O(x_2)$ can be expressed as equations based on $(\alpha,\beta,\gamma)$, which turn out to impose polynomial constraints on the joint distribution $P[\gamma ; \alpha, \beta]$. These constraints are non-trivial, as demonstrated by constructing specific examples where they do not hold. Therefore, the subset of $P[\gamma ; \alpha, \beta]$ violating the theorem's conditions is of Lebesgue measure zero. Consequently, the subset of joint distributions $P(X,Y,Z)$ that violate the theorem is also of measure zero.
\end{proof}

\begin{remark}
    Analogously, we could define the observational partition $\Pi_o(\mathcal{Y})$ and causal partition $\Pi_c(\mathcal{Y})$ of $\mathcal{Y}$ with the binning technique. The same argument applies to the $\Pi_c(\mathcal{Y})$ and $\Pi_o(\mathcal{Y})$, ensuring that the subset of distributions where $\Pi_c(\mathcal{Y})$ is not a coarsening of $\Pi_o(\mathcal{Y})$ also has Lebesgue measure zero.
\end{remark}

\subsection{Approximate Partitions}

The causal coarsening theorem says that the causal partition refines the observational one. But it provides no guarantee regarding the size of the partition. At the extreme, if each state $x$ has a distinctive effect $y$, the partition will be trivial: no two states $x, y^\prime$ will belong to the same equivalence class. A natural question, then, is whether one can find a partition $\Pi(\mathcal{X})$ which ensures that states $x \in \mathcal{X}$ have effects which are individually close to the average within their cell:
\[
\bigg|P(y |x) - \sum_{x^\prime \sim X} P(y\,|\,x^\prime)\bigg| < \epsilon.
\]
% Comment: minimize the maximum over all X, or minimize the average over all X.
 The objective is to maximize the minimum value of $\epsilon$ for which the above holds for all pairs $(x,y)$ (\cite{beckersApproximateCausalAbstraction2019a}). This may be relaxed to hold on average over all pairs $(x,y)$ (\cite{rischelCompositionalAbstractionError2021}),\footnote{For a very general statement of the objective, see \cite[p. 18]{geiger2024causalabstractiontheoreticalfoundation}.}  but here we consider a different relaxation: we can require that the average effects across any $\epsilon$-fraction of the variables in a cell be $\epsilon$-close to the average effects across the cell as a whole. In other words, where $W_x= \{x^\prime: x \sim x^\prime\}$, and similarly for $W_y$, we require that for disjoint $A,B \subseteq W_x$ with $\epsilon < \nicefrac{|A|}{|W_x|}, \nicefrac{|B|}{|W_y|}$,
\[
    \bigg|\sum_{x^\prime \in A, y^\prime \in B} P(x^\prime|y^\prime) - \sum_{x^{\prime\prime} \in W_x, y^{\prime\prime} \in W_y} P(x^{\prime\prime}|y^{\prime\prime}) \bigg| < \epsilon
    \]

As we now show, the existence of such partitions over $\mathcal{X}$ (and $\mathcal{Y}$), and their relationship to the ``randomness'' of the effects of $\mathcal{X}$ on $\mathcal{Y}$, follow from a result of \cite{csaba2014weighted}, who show (roughly) that a weighted graph can be partitioned into similarly sized cells, which have relatively stable average weights into each other. We first introduce the relevant notion of a partition:

\begin{definition}[Weighted graphs]
    A \textit{weighted graph} $G$ is a set of vertices $V(G)$ and a set of undirected edges $E(G)$, along with a weight function assigning a weight $w_e \in \mathbb{R}^{\geq0}$ to each edge $e\in E$.
\end{definition}

\begin{definition}[$\epsilon$-regularity]
    Let $A, B$ be disjoint subsets of $V(G)$. The pair $(A,B)$ is $\epsilon$-regular if for every $A^\prime \subseteq B, B^\prime \subseteq B$ with $\epsilon < \nicefrac{|A^\prime|}{|A|}, \nicefrac{|B^\prime|}{|B|}$,
    \[
    \bigg| \frac{w(A^\prime, B^\prime)}{|A^\prime||B^\prime|} - \frac{w(A, B)}{|A||B|}\bigg| < \epsilon.
    \]
\end{definition}

\begin{definition}[$\epsilon$-regular partitions]
    A weighted graph $G$ has an $\epsilon$-regular partition if its vertex set $V$ can be participated clusters $W_0,\dots,W_{\ell}$, such that
    \begin{itemize}
        \item 
        The clusters $W_i$ are equal in size (plus or minus one), and $\epsilon > \nicefrac{|W_i|}{|V|}$.
        \item 
        All but at most $\epsilon \ell^2$ of the pairs $(W_i,W_j)$ are $\epsilon$-regular.
    \end{itemize}
\end{definition}

\noindent The upper bound on the number of clusters needed to provide a partition of this kind depends on how ``random'' the graph is, in the following sense:

\begin{definition}[Quasi-randomness]
    A weighted graph $G$ with vertex set $V$ is $(D,\beta)$-quasi-random if for any disjoint $A, B \subseteq V $ such that $\beta < \nicefrac{|A|}{|V|},\nicefrac{|B|}{|V|}$,
    \[
    \frac{1}{D}<\frac{w(A,B)}{|A||B|} \bigg/ \frac{w(V,V)}{|V||V|}  < D,
    \]
    where $w(A,B)$ denote the sum of the weights of all edges with one endpoint in $A$ and the other in $B$.
\end{definition}

In other words, $G$ is $(D,\beta)$ quasi-random when subsets $A,B$ of sizes at least $\beta$ (relative to $V$) have an average edge density which is equal to that of the overall graph, up to a $D$ multiplicative factor. \cite{csaba2014weighted} show (p. 5):

\begin{lemma}[Weighted Regularity Lemma]\label{lemma:regularity}
    Let $D > 1$ and $\beta,\epsilon \in (0,1)$ such that $0 < \beta \ll \epsilon \ll 1 / D$ and let $L \geq 1$. If $G$ is a weighted $(D,\beta)$-quasi-random graph on $n$ vertices with $n$ sufficiently large depending on $\epsilon, L$, then $G$ admits a weighted $\epsilon$-regular partition into sets $W_0,\dots,W_\ell$ such that $L \leq \ell\leq C_{e,L}$ for some constant $C_{e,L}$.
\end{lemma}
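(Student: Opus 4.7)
The plan is to adapt the classical index-increment (or energy) argument of Szemerédi's regularity lemma to the weighted setting, with the quasi-randomness hypothesis playing the role that edge densities in $[0,1]$ play in the unweighted case. For a partition $\mathcal{P} = \{W_0, W_1, \ldots, W_\ell\}$ of $V(G)$, I would define the index
\[
\operatorname{ind}(\mathcal{P}) = \sum_{i,j \geq 1} \frac{|W_i||W_j|}{|V|^2}\, d(W_i, W_j)^2,
\]
where $d(A,B) = w(A,B)/(|A||B|)$. In the classical proof, boundedness of $\operatorname{ind}(\mathcal{P})$ is automatic because densities lie in $[0,1]$; in the weighted case this is not so, and this is precisely where $(D,\beta)$-quasi-randomness is invoked. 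As long as every cluster has size at least $\beta|V|$, the pairwise densities are trapped within a factor of $D$ of the global density $d(V,V)$, which yields a uniform upper bound $\operatorname{ind}(\mathcal{P}) \leq D^2 d(V,V)^2$.

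Next I would prove the central increment lemma: if $\mathcal{P}$ is \emph{not} $\epsilon$-regular, then there is a refinement $\mathcal{P}'$ with at most $|\mathcal{P}|\cdot 2^{|\mathcal{P}|}$ parts and
\[
\operatorname{ind}(\mathcal{P}') \geq \operatorname{ind}(\mathcal{P}) + \delta(\epsilon, D),
\]
for some strictly positive $\delta(\epsilon,D)$. This follows the standard route: an irregular pair $(W_i,W_j)$ is witnessed by $A \subseteq W_i$ and $B \subseteq W_j$ with $|d(A,B) - d(W_i,W_j)| \geq \epsilon$ and $|A|/|W_i|, |B|/|W_j| > \epsilon$; applying the conditional-variance / Cauchy--Schwarz identity to the refinement induced by $A,B$ contributes at least $\epsilon^2 \cdot (|W_i||W_j|/|V|^2)$ times a density-squared correction. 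The weighted quasi-randomness ensures these densities are bounded below as well as above, so the correction is at least a function of $\epsilon$ and $D$, independent of the absolute scale of the weights.

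Iterating this refinement starting from an equitable initial partition into $L$ parts gives the conclusion. Since $\operatorname{ind}$ is non-decreasing, bounded above by $D^2 d(V,V)^2$, and increases by at least $\delta(\epsilon,D)$ at every non-regular step, the process must terminate after at most $D^2 d(V,V)^2/\delta(\epsilon,D)$ refinements, yielding a tower-type constant $C_{\epsilon,L}$ on the final number of clusters. The exceptional cluster $W_0$ absorbs leftover vertices to make the remaining $W_i$ equal in size up to one, which requires $n$ large compared to $\epsilon$ and $L$.

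The main obstacle is controlling the increment $\delta(\epsilon,D)$ uniformly through the iteration. In contrast to the unweighted case, one must track the minimum cluster size: once a cluster shrinks below $\beta|V|$, quasi-randomness no longer gives any control on its densities and the scheme breaks down. This is why the hypothesis $\beta \ll \epsilon \ll 1/D$ is essential, and why $n$ must be taken sufficiently large depending on both $\epsilon$ and $L$: the tower-type bound on the number of refinements must fit inside the budget imposed by $\beta$, so that every surviving cluster throughout the iteration retains enough mass for quasi-randomness to continue to apply. The bookkeeping needed to show the refinement preserves this lower bound on cluster size while producing the required increment is the delicate step; everything else is a routine adaptation of the unweighted argument.
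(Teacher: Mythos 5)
The paper does not prove this lemma at all: it is imported verbatim from \cite{csaba2014weighted} (the text reads ``\cite{csaba2014weighted} show (p.~5)''), so there is no in-paper argument to compare yours against. Judged on its own terms, your sketch follows the standard energy-increment architecture that is indeed how weighted regularity lemmas of this type are established in the cited literature, and you correctly identify the two places where quasi-randomness must be invoked: to bound the index from above, and to keep the increment per refinement step bounded away from zero. Two points deserve tightening. First, your claim that the increment $\delta(\epsilon,D)$ is ``independent of the absolute scale of the weights'' is not right as stated: both the upper bound $D^2 d(V,V)^2$ on the index and the per-step gain scale with $d(V,V)^2$, so you should either work with the normalized index $\operatorname{ind}(\mathcal{P})/d(V,V)^2$ or observe that only the \emph{ratio} of the two quantities matters for terminating the iteration; either fix is routine but one of them is needed. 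Second, the step you yourself flag as delicate --- verifying that every cluster (and every witness set $A$ with $|A|>\epsilon|W_i|$) stays above the $\beta|V|$ threshold throughout the tower-many refinement rounds, which is exactly what the hierarchy $\beta\ll\epsilon\ll 1/D$ is for --- is the actual content of the weighted extension and is deferred rather than carried out. As a proof \emph{plan} this is the right one and matches the source; as a proof it is incomplete precisely at the point where the weighted setting differs from Szemer\'edi's original lemma.
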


The desired result follows immediately:

\begin{theorem}\label{theorem:regular partitions}
    For micro-causes $\mathcal{X}$ and effects $\mathcal{Y}$, let $W_x, W_Y$ for $x \in \mathcal{X}, Y \in \mathcal{Y}$ denote the equivalence classes of $x$ and $\mathcal{Y}$ under partitions $\Pi(\mathcal{X})$ and $\Pi(\mathcal{Y})$. There exist partitions $\Pi(\mathcal{X})$ and $\Pi(\mathcal{Y})$ which are $\epsilon$-regular, in the sense that for all $A, \subseteq W_x, B \subseteq W_y$ with $\epsilon < \nicefrac{|A|}{|W_x|}, \nicefrac{|B|}{|W_y|}$,
    \[
    \bigg|\sum_{x^\prime \in A, y^\prime \in B} P(x^\prime|y^\prime) - \sum_{x^{\prime\prime} \in W_x, y^{\prime\prime} \in W_y} P(x^{\prime\prime}|y^{\prime\prime}) \bigg| < \epsilon
    \]
\end{theorem}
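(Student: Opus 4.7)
My plan is to apply the Weighted Regularity Lemma (Lemma \ref{lemma:regularity}) to a weighted bipartite graph encoding the conditional distribution $P(x \mid y)$, and then to read off the partitions $\Pi(\mathcal{X})$ and $\Pi(\mathcal{Y})$ from the resulting regular partition. Concretely, I would form a graph $G$ on vertex set $V(G) = \mathcal{X}\sqcup\mathcal{Y}$ with edge weight $w(x,y) = P(x \mid y)$ for every cross-pair $x\in\mathcal{X}, y\in\mathcal{Y}$. Under this encoding the graph-theoretic quantity $w(A,B)$ coincides with $\sum_{x'\in A,\,y'\in B} P(x' \mid y')$, so the $\epsilon$-regularity condition between two clusters translates, after the routine rescaling that appears in the definition of $\epsilon$-regularity, into precisely the probabilistic bound the theorem asserts.

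The key technical step is verifying that $G$ is $(D,\beta)$-quasi-random for some triple with $0<\beta\ll\epsilon\ll 1/D$, so that Lemma \ref{lemma:regularity} applies. Concretely, I need that for every disjoint $A,B\subseteq V(G)$ of size at least $\beta|V(G)|$, the ratio $w(A,B)/(|A||B|)$ lies within a factor $D$ of the global density $w(V,V)/|V|^2$. A sufficient condition is a uniform two-sided bound on $P(x \mid y)$ over its common support. In the discrete setting of Assumption \ref{assumption:discrete} this is automatic once the distribution is strictly positive; in the continuous setting it follows from the piecewise continuity of the density $f$ in Assumption \ref{assumption:smoothness}, applied to a binning as in Definition \ref{def:partition-binning} fine enough to resolve the discontinuities of $f$. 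I expect this quasi-randomness verification to be the main obstacle, since the lower bound on $P(x \mid y)$ must be uniform across the support and may in the worst case require a mild positivity assumption beyond those stated.

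Once quasi-randomness is established, Lemma \ref{lemma:regularity} returns an $\epsilon$-regular partition $W_0,\dots,W_\ell$ of $V(G)$. Because $G$ is bipartite, refining this partition by intersecting each cluster with $\mathcal{X}$ and with $\mathcal{Y}$ (at most doubling the number of parts, which is harmless for the bound on $\ell$) forces every cluster to lie entirely on one side; the restrictions to $\mathcal{X}$ and to $\mathcal{Y}$ are then the desired $\Pi(\mathcal{X})$ and $\Pi(\mathcal{Y})$. For any pair $(W_x,W_y)$ witnessing the $\epsilon$-regular condition and any $A\subseteq W_x, B\subseteq W_y$ meeting the size thresholds $\epsilon<|A|/|W_x|, |B|/|W_y|$, unwinding the definitions of $w$ and $\epsilon$-regularity delivers the inequality claimed in Theorem \ref{theorem:regular partitions}; the at most $\epsilon\ell^2$ exceptional pairs produced by the lemma can be absorbed by a standard slight loosening of $\epsilon$, completing the argument.
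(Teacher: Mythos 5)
Your proposal takes essentially the same route as the paper: encode the conditional probabilities as edge weights on a bipartite graph over $\mathcal{X}\sqcup\mathcal{Y}$ and invoke the weighted regularity lemma (Lemma~\ref{lemma:regularity}) to extract the partitions, which is exactly the paper's one-paragraph argument. The one place you go beyond the paper is in flagging that Lemma~\ref{lemma:regularity} requires $(D,\beta)$-quasi-randomness of the weighted graph, a hypothesis the paper's own proof never verifies and which, as you note, does not follow from the stated assumptions without some uniform positivity of $P(x\mid y)$; your treatment of that point, and of the $\epsilon\ell^2$ exceptional pairs, is if anything more careful than the source.
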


\noindent Indeed, let $G$ be a bipartite graph, with vertices corresponding to microstates~$\mathcal{X}, \mathcal{Y}$. Define $w_{(x,y)} = P(y|x)$. Then apply Lemma~\ref{lemma:regularity} to obtain the desired partition. Note that Theorem~\ref{theorem:regular partitions} is in fact equivalent to Lemma~\ref{lemma:regularity}, restricted to bipartite graphs. For any such graph, we may scale down edge weights by a constant $C$ (the maximum edge weight), view the resulting weights as probabilities $P(y|x)$, apply Theorem~\ref{theorem:regular partitions}, and then obtain an $\epsilon^\prime$-regular partition, for $\epsilon^\prime = \epsilon / C$.

Lemma~\ref{lemma:regularity} generalizes Szemerédi's Regularity Lemma (\cite{szemeredi1975}), and a significant literature explores the algorithmic side of this lemma, e.g. \cite{FOX_LOVÁSZ_ZHAO_2017}. Nonetheless, it what follows, we deploy a straightforward generalization of the CFL algorithm, which we find reliably reduces dimensionality while detecting heterogeneous effects of macro-states.

\section{Social Science Applications}
\label{sec:datasets}

This section presents empirical results from applying the CFL algorithm to two prominent social science datasets: the National Supported Work (NSW) dataset and the Voting dataset. We demonstrate how CFL-derived macrostates enhance causal inference, uncover the heterogeneity in causal effects, and reduce the dimensionality of the causal analysis. In Appendix \ref{sec:redlining-data}, using the redlining dataset with census data, We also show that CFL can be incorporated with the canonical causal inference technique, such as propensity score matching, to combat issues due to the nature of the observational dataset. Appendix \ref{sec:binning-implementation} presents the results from binning implementation on the NSW dataset.

\textbf{Algorithmic Preliminaries:} The CFL algorithm constitutes a conditional density estimation by a neural network and a partition of the dataset by clustering observations based on the estimated conditional probability. Common clustering techniques are KMeans and DBSCAN (Density-Based Spatial Clustering of Applications with Noise), the first of which allows controlling for the number of clusters as a hyperparameter, while the second of which tunes the number of clusters as a model parameter (\cite{cfl2022}). We use KMeans for our clustering tasks in this paper allowing for more flexible visualization and a better understanding of the algorithm, though the aim of the CFL algorithm is to identify the optimal number of macrostate without supervision. 

\subsection{The NSW Dataset}
\label{sec:NSW}

The National Supported Work (NSW) dataset is a cornerstone in evaluating the effectiveness of job training programs. Originally utilized to assess the impact of a randomized labor training program on participants' future earnings in \cite{lalonde1986}, this dataset provides a robust framework for testing the validity of the CFL algorithm in social science data. By deriving macrostates from socioeconomic indicators, we aim to construct variables that better capture the causal pathways influencing earnings outcomes.

% \begin{figure}[tp]
%     \centering
%     \includegraphics[scale=0.18]{./}
%     \caption{Clustering of NSW Participants Based on Education, Age, and Treatment Assignment}
%     \label{fig:NSW_clusters}
% \end{figure}

\begin{figure}[t]
    \centering
    \caption{Clustering of NSW Participants Based on Education, Age, and Treatment Assignment}
    \subfloat[2 clusters]{%
        \includegraphics[width=0.32\textwidth]{./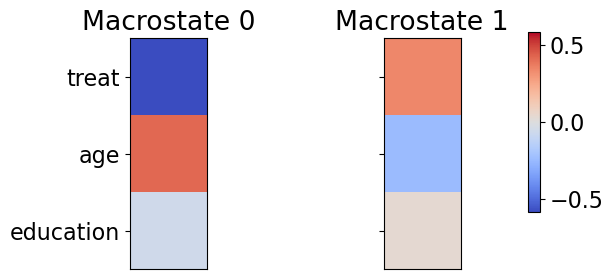}
        \label{fig:NSW_clusters_2}
    }
    \quad\quad
    \subfloat[3 clusters]{%
        \includegraphics[width=0.45\textwidth]{./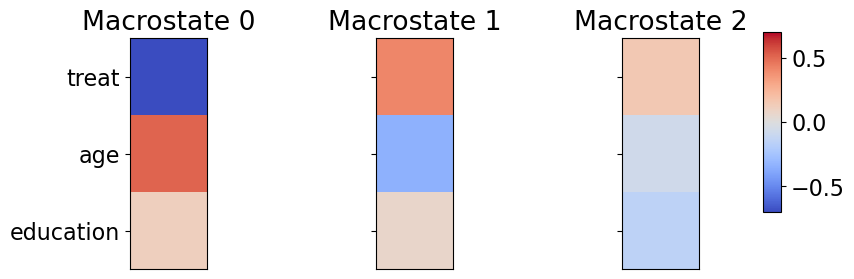}
        \label{fig:NSW_clusters_3}
        }
        \hfill
    \subfloat[4 clusters]{%
        \includegraphics[width=0.5\textwidth]{./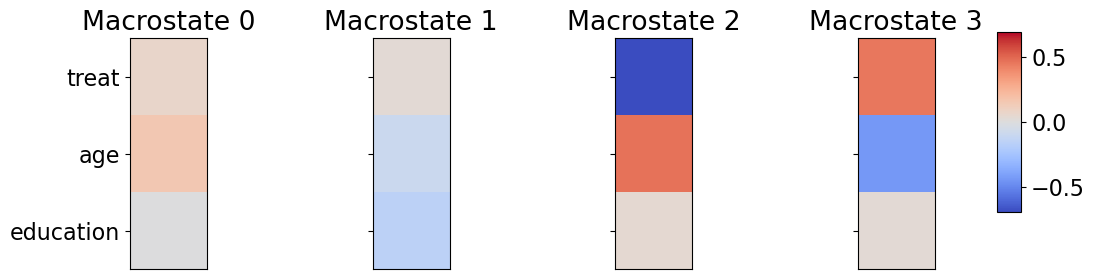}
        \label{fig:NSW_clusters_4}
    }
    \label{fig:NSW_clusters}
\end{figure}

\textbf{Clustering Analysis:} Figure \ref{fig:NSW_clusters} compares global (sample) averages and local (within-macrostate) averages of covariates across clusters constructed by the CFL algorithm. The variable \texttt{`treat'} is a treatment indicator. Varaible \texttt{`age'} and \texttt{`education'} are integers representing participants’ age and years of education, respectively. The outcome variable measures the change in income between pre- and post-treatment periods.

With two clusters, CFL partitions the sample into one macrostate predominantly composed of younger, treated individuals and another macrostate composed of older, untreated individuals. The local averages of education remain close to the global average, whereas the local averages of treatment and age differ significantly from global ones, suggesting that they drive the clustering process. 
% Given the positive and significant treatment effect, treated individuals are expected to experience greater income gains, reinforcing the notion that CFL captures heterogeneity in treatment assignment. The randomization of treatment ensures that the observed variation in age across macrostates does not confound the causal relationship between treatment and outcome.

As the number of clusters increases, CFL refines the partitioning, introducing greater granularity. While education initially plays a limited role in defining macrostate, it becomes more influential in the three- and four-cluster cases. A cluster primarily composed of untreated individuals remains stable across different specifications, whereas other groups are further divided into subgroups with distinct characteristics. 

% \textbf{Clustering Analysis:} As illustrated in Figure \ref{fig:NSW_clusters}, The CFL algorithm identifies distinct clusters among the NSW participants based on age, education, and treatment assignment. Notably, most younger participants (under 20 years old) are predominantly clustered within the treated group. This clustering indicates that CFL recognizes younger individuals as having a similar likelihood of achieving the same level of earnings outcomes as those who received the labor training program, despite not undergoing the treatment. Conversely, older participants exhibit a more balanced distribution between treated and untreated groups, suggesting that the treatment effect is substantial enough for CFL to differentiate between these individuals effectively. 

% \textbf{Interpretation of Clusters:} The bifurcation at approximately 20 years old signifies that CFL detects a threshold age beyond which the treatment's impact becomes more discernible. For younger participants, the similarity in outcome likelihood between treated and untreated groups implies that external factors, possibly baseline earning abilities, may overshadow the treatment effect. This delineation underscores the presence of heterogeneity in treatment effects, where age acts as a moderating variable influencing the efficacy of the labor training program.

\subsection{Causal Feature Learning Detects Heterogeneous Causal Effects}
\label{sec:heterogeneity}

Heterogeneity in causal effects is a prevalent phenomenon in social sciences, where the impact of an intervention varies across different subpopulations. The CFL algorithm facilitates the identification of such heterogeneity by uncovering macrostates that segment the population based on underlying causal mechanisms.
\begin{figure}[t]
    \centering
    \caption{Distribution of Treated and Untreated Units Across Clusters with Kernel Density Estimates}
    \includegraphics[scale=.2]{./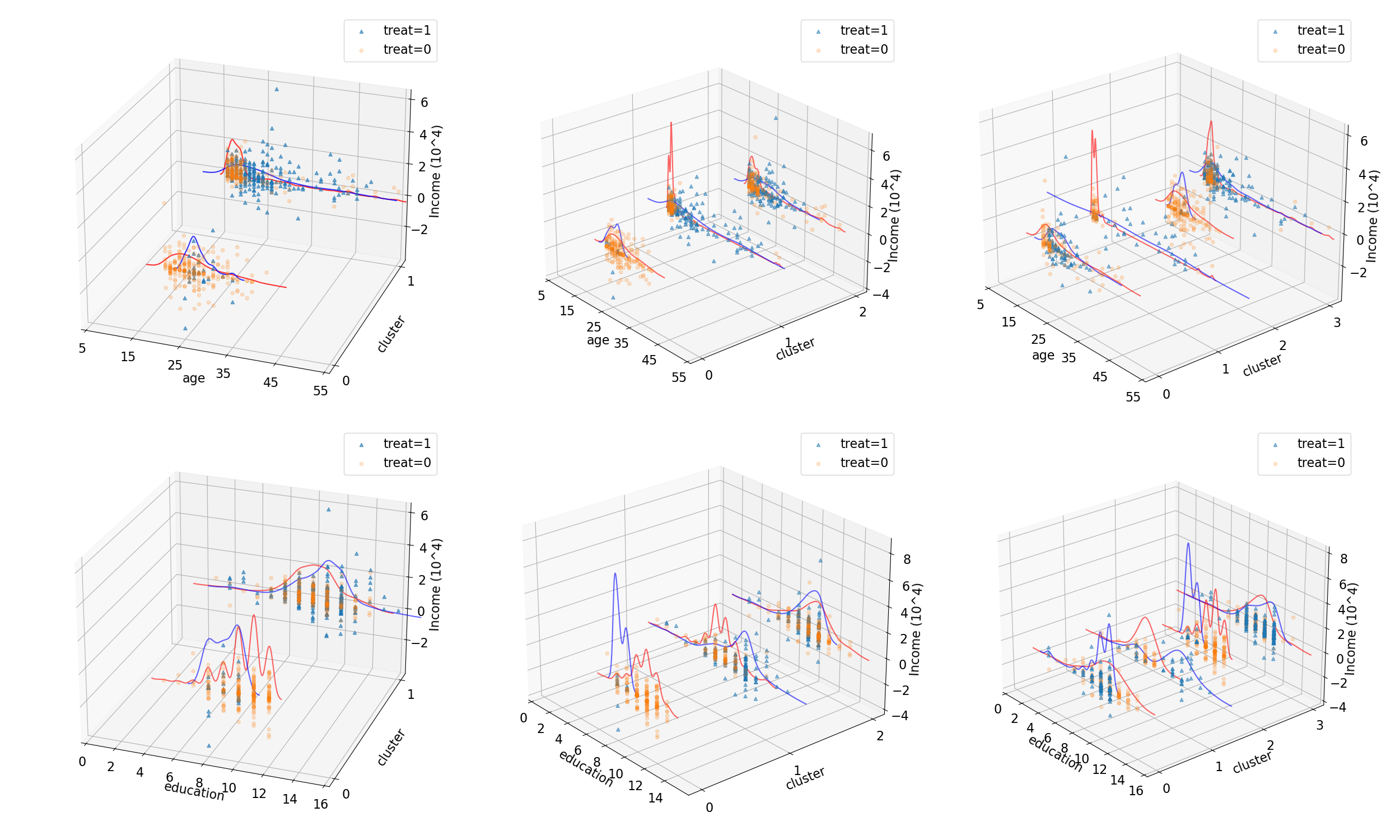}
    \label{fig:heterogeneity_clusters}
\end{figure}

\textbf{Cluster Distribution Across Ages:} Figure \ref{fig:heterogeneity_clusters} illustrates the distribution of treated and untreated units across age and education-based clusters. The CFL algorithm partitions the NSW participants into distinct subgroups based on these covariates and treatment status. Notably, younger participants (under 20 years old) predominantly cluster within the treated group, suggesting that CFL identifies them as having a comparable likelihood of achieving similar earnings outcomes, regardless of treatment status. A bifurcation around age 20 suggests a threshold beyond which treatment effects become more pronounced. For younger participants, the similarity in outcome probabilities between treated and untreated groups implies that external factors, such as baseline earning ability, may overshadow the treatment effect. This finding highlights the presence of heterogeneous treatment effects, with age acting as a moderating factor. Conversely, older participants exhibit a more balanced distribution between treated and untreated groups, suggesting that the treatment effect is sufficiently strong for CFL to distinguish between them. %In the absence of heterogeneity, one would expect a symmetric distribution of treated and untreated units across clusters. However, the observed asymmetry, particularly among younger individuals, indicates that CFL detects meaningful variations in treatment effects. The consistent clustering of younger untreated individuals alongside treated units suggests that CFL captures similarities in outcome probabilities, independent of treatment status.

We can test the validity of the heterogeneity detected by CFL, from a more canonical perspective, by considering the following regression: 
\begin{align*}
Y_i = \beta_0 + \beta_1 \text{Treat}_i + \beta_2 \mathds{1}[age_i > \overline{age}] + \beta_3 \text{Treat}_i \mathds{1}[age_i > \overline{age}] + \epsilon_i
\end{align*}
The interaction term coefficient, $\beta_3$, captures the heterogeneity in treatment effects on earnings changes between individuals below and above the sample’s average age. Table \ref{tab:regression_results} shows that $\beta_3$ is large and statistically significant, indicating that younger individuals exhibit a weaker response to treatment, while older individuals experience significantly greater earnings gains post-treatment. These findings align with CFL clustering results shown above.

\textbf{Implications of Asymmetric Clustering:} The persistence of this asymmetric distribution, even as the number of clusters increases in Figure \ref{fig:heterogeneity_clusters}, reinforces the notion that the detected heterogeneity is not an artifact of inadequate clustering granularity. Instead, it reflects genuine variations in treatment effects across different age groups. The younger cohort may inherently possess characteristics that mitigate the treatment's impact leading to similar outcome distributions irrespective of treatment status. 

\textbf{Heterogeneity vs. Bias Considerations:} The asymmetric clustering in the NSW example raises important considerations regarding heterogeneity and potential biases. If the treatment assignment is not well randomized, it becomes challenging to discern whether the observed differences in treatment effects across subgroups indicate true heterogeneity or merely reflect biased sampling. For example, if younger people with inherently higher baseline earning ability, which could be unobservable, are more likely to remain untreated, the similar outcome distributions in Figure \ref{fig:heterogeneity_clusters} between treated and untreated young individuals could stem from this confounding relationship rather than the treatment's heterogeneous effectiveness.
% Therefore, careful consideration of treatment randomization and potential confounders is imperative to interpret the heterogeneity detected by CFL accurately.

\subsubsection{Rationale behind heterogeneity detection by CFL}

\begin{lemma}[Heterogeneity on Average]
    Assume randomized treatment, and let Y be an outcome of interest, $D$ be a treatment dummy variable, and $\mathcal{X}$ be a random vector for covariates. Further assume that the expectations of two variables are unequal as long as the distributions are unequal. If there exists some $i,j$ such that $(D=1, X=x_j) \sim (D=0, x=x_j)$ and $(D=1, X=x_i) \not\sim (D=0, X=x_i)$, or if there exists some $i, j$ such that $(D=1, X=x_i) \sim (D=1, X=x_j)$ and $(D=0, X=x_i) \not\sim (D=0, X=x_j)$, this implies the heterogeneity of treatment effect. %, i.e. $\mathbb{E}[Y \mid D=1,X=x_j] - \mathbb{E}[Y \mid D=0, X=x_j] \neq \mathbb{E}[Y \mid D=1, X=x_i] - \mathbb{E}[Y \mid D=0, X=x_i]$.
\end{lemma}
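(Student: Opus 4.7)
The plan is to argue by cases, translating each distributional equivalence and non-equivalence into a statement about conditional expectations via the stated assumption (equal distributions iff equal means), and then forming the conditional average treatment effect (CATE) $\tau(x) \eqdef \E[Y \mid D=1, X=x] - \E[Y \mid D=0, X=x]$ at the two covariate points. Randomization of $D$ is invoked once at the outset to justify reading these conditional expectations as potential-outcome means, so that $\tau(x)$ is the genuine causal effect at $X=x$; heterogeneity of treatment effect will then be witnessed by exhibiting two covariate values $x_i, x_j$ with $\tau(x_i) \neq \tau(x_j)$.

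For the first case, suppose $(D=1, X=x_j) \sim (D=0, X=x_j)$ while $(D=1, X=x_i) \not\sim (D=0, X=x_i)$. The assumption that distributional equality is equivalent to equality of expectations (here used in both directions) gives $\E[Y \mid D=1, X=x_j] = \E[Y \mid D=0, X=x_j]$ and $\E[Y \mid D=1, X=x_i] \neq \E[Y \mid D=0, X=x_i]$. Hence $\tau(x_j) = 0$ and $\tau(x_i) \neq 0$, so $\tau$ is not constant on $\mathcal{X}$ and the treatment effect is heterogeneous.

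For the second case, suppose $(D=1, X=x_i) \sim (D=1, X=x_j)$ and $(D=0, X=x_i) \not\sim (D=0, X=x_j)$. Applying the same distribution-to-mean translation, one has $\E[Y \mid D=1, X=x_i] = \E[Y \mid D=1, X=x_j]$ and $\E[Y \mid D=0, X=x_i] \neq \E[Y \mid D=0, X=x_j]$. Subtracting,
\[
\tau(x_i) - \tau(x_j) = \bigl(\E[Y \mid D=1, X=x_i] - \E[Y \mid D=1, X=x_j]\bigr) - \bigl(\E[Y \mid D=0, X=x_i] - \E[Y \mid D=0, X=x_j]\bigr),
\]
where the first parenthesised difference vanishes and the second is nonzero, so $\tau(x_i) \neq \tau(x_j)$ and again the effect is heterogeneous.

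There is no real obstacle here beyond being careful about what ``$\sim$'' and ``heterogeneity'' mean; the proof is essentially a bookkeeping exercise once the notational conventions are pinned down. The only subtle point worth flagging explicitly in the write-up is the role of randomization: without it, the equalities and inequalities among the $\E[Y\mid D=d, X=x]$ need not correspond to differences in causal effects, because of confounding between $D$ and unobservables. Given randomization, $\E[Y \mid D=d, X=x] = \E[Y(d)\mid X=x]$, so the conclusion about $\tau$ transfers to the causal CATE and the claim of heterogeneity is substantive rather than merely associational.
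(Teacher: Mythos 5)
Your proof is correct and takes essentially the same route as the paper's: the same two-case analysis, translating the distributional equivalences into (in)equalities of conditional expectations and concluding that the conditional average treatment effect differs across $x_i$ and $x_j$. Your write-up is in fact slightly cleaner than the paper's, since you work directly with expectations (rather than formally ``subtracting'' distributions) and you correctly note that only the unequal-distributions-implies-unequal-means direction requires the stated assumption, the converse being automatic.
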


\begin{proof}
Heterogeneity arises when
\begin{align*}
\mathbb{E}[Y \mid D=1, X=x_j] - \mathbb{E}[Y \mid D=0, X=x_j] \neq \mathbb{E}[Y \mid D=1, X=x_i] - \mathbb{E}[Y \mid D=0, X=x_i]
\end{align*}
for some $i, j$. If
\begin{align*}
P(Y \mid D=1, X=x_j) - P(Y \mid D=0, X=x_j) \neq P(Y \mid D=1, X=x_i) - P(Y \mid D=0, X=x_i),
\end{align*}
this can imply the previous inequality, but the converse may not necessarily be true. Thus, if CFL can detect some $i, j$ such that the above inequality holds,
CFL will manifest heterogeneity. There are two cases where this will hold:

1. For some $j$, $(D=1, X=x_j) \sim (D=0, x=x_j)$, which means they are clustered into one macrostate, so that
   \[
   P(Y \mid D=1, X=x_j) = P(Y \mid D=0, X=x_j),
   \]
   but for some $i$, $(D=1, X=x_i)$ is not in the same equivalence class as $(D=0, X=x_i)$, so
   \[
   P(Y \mid D=1, X=x_i) \neq P(Y \mid D=0, X=x_i),
   \]
   meaning that treatment has no effect on some subpopulation but has an effect on others.

2. For some $i, j$, $(D=1, X=x_i) \sim (D=1, X=x_j)$, so
   \[
   P(Y \mid D=1, X=x_j) = P(Y \mid D=1, X=x_i),
   \]
   but $(D=0, X=x_i)$ is not in the same equivalence class as $(D=0, X=x_j)$, so
   \[
   P(Y \mid D=0, X=x_i) \neq P(Y \mid D=0, X=x_j),
   \]
   meaning that $P(Y \mid D=1, X) - P(Y \mid D=0, X)$ is not constant across all values of $\mathcal{X}$.
    
\end{proof}

The assumption that unequal distribution implies unequal expectation may not always be true in the above lemma, so in this case, the CFL algorithm may not be valid to manifest heterogeneity on average. However, though heterogeneity on average is a common way to define the heterogeneous treatment effect, it is a summary statistic of the distribution, and the treatment can still be heterogeneous on the distribution level, if not on the expected level. Therefore, we can propose that it is sufficient for a treatment to be heterogeneous if it is heterogeneous on the distribution level according to the next lemma. 

\begin{lemma}[Heterogeneity on Distribution]
Assume randomized treatment, and let Y be an outcome of interest, $D$ be a treatment dummy variable, and $\mathcal{X}$ be a random vector for covariates. Heterogeneous treatment effect exists if $P(Y \mid D=1, X=x_j) - P(Y \mid D=0, X=x_j) \neq P(Y \mid D=1, X=x_i) - P(Y \mid D=0, X=x_i)$.
\end{lemma}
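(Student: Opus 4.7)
The plan is to treat this lemma as essentially a definitional unpacking: under randomization, the conditional distribution $P(Y \mid D=d, X=x)$ coincides with the distribution of the potential outcome $Y^{(d)}$ given $X=x$, so the hypothesis of the lemma translates directly into a statement about the distributional treatment effect varying across covariate values. The first step is to fix a notion of heterogeneous treatment effect at the distribution level, namely the map $x \mapsto P(Y^{(1)} \mid X=x) - P(Y^{(0)} \mid X=x)$ being non-constant in $x$, and contrast it with the expectation-level notion already used in Lemma 1.

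Next, I would invoke the randomization assumption to justify the identification $P(Y \mid D=d, X=x) = P(Y^{(d)} \mid X=x)$, noting that $D$ is independent of $(Y^{(0)}, Y^{(1)})$ by randomization and hence independent given $X$ as well. Substituting into the hypothesis of the lemma yields
\begin{align*}
P(Y^{(1)} \mid X=x_j) - P(Y^{(0)} \mid X=x_j) \;\neq\; P(Y^{(1)} \mid X=x_i) - P(Y^{(0)} \mid X=x_i),
\end{align*}
which is exactly the distributional heterogeneity definition evaluated at $x_i$ and $x_j$. Concluding requires only one line: the existence of such $i,j$ witnesses that the distributional treatment effect is not constant in $x$, so a heterogeneous treatment effect exists in the distributional sense.

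The main obstacle is not technical but expository, since the result is close to a tautology once the distributional notion of heterogeneity is made precise. The care needed is in (i) being explicit that the conclusion is distributional rather than expectational, so that the lemma genuinely complements and does not duplicate Lemma 1, and (ii) handling a subtle measure-theoretic point: the difference $P(Y^{(1)}\mid X=x) - P(Y^{(0)}\mid X=x)$ is a signed measure on the outcome space, so ``$\neq$'' should be read as inequality of signed measures (equivalently, there exists a measurable $B \subseteq \mathcal{Y}$ on which the two signed measures disagree). Once this is stated, the proof is a one-line substitution and requires no further work.
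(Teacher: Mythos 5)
Your proposal is correct and matches the paper's treatment: the paper gives no explicit proof, presenting the lemma as essentially a definition of distribution-level heterogeneity, with randomization implicitly licensing the identification $P(Y \mid D=d, X=x) = P(Y^{(d)} \mid X=x)$ that you spell out. Your added care about reading ``$\neq$'' as inequality of signed measures on the outcome space is a precision the paper omits but does not change the substance.
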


These arguments are similar and can account for the heterogeneity detected by CFL in the NSW dataset, where a portion of untreated young individuals are clustered into the same macrostate as the treated young individuals.

\begin{remark}
Randomized treatment is a necessary condition for the validity of CFL in detecting heterogeneity.
\end{remark}

Note that if the treatment is not randomly assigned, under the potential outcome framework, we can write a formal decomposition to show the existence of selection bias that will contaminate the detection of heterogeneity by CFL:
\begin{align*}
\mathbb{E}[Y(1) \mid D=1, X=x_j] - \mathbb{E}[Y(0) &\mid D=0, X=x_j] \neq \\
&\mathbb{E}[Y(1) \mid D=1, X=x_i] - \mathbb{E}[Y(0) \mid D=0, X=x_i]
\end{align*}
Plus and minus the unobserved term:
\begin{align*}
    \Rightarrow &\mathbb{E}[Y(1) \mid D=1, X=x_j] - \mathbb{E}[Y(0) \mid D=1, X=x_j] \\
    & \phantom{\mathbb{E}[Y(1) \mid D=1, X=x_j]} + \mathbb{E}[Y(0) \mid D=1, X=x_j] - \mathbb{E}[Y(0) \mid D=0, X=x_j] \\
    &\neq \mathbb{E}[Y(1) \mid D=1, X=x_i] - \mathbb{E}[Y(0) \mid D=1, X=x_i] \\
    &\phantom{\mathbb{E}[Y(1) \mid D=1, X=x_i]}+ \mathbb{E}[Y(0) \mid D=1, X=x_i] - \mathbb{E}[Y(0) \mid D=0, X=x_i]
\end{align*}
Rearrange:
\begin{align*}
    \Rightarrow &\mathbb{E}[Y(1) - Y(0) \mid D=1, X=x_j] \underbrace{ + \mathbb{E}[Y(0) \mid D=1, X=x_j] - \mathbb{E}[Y(0) \mid D=0, X=x_j]}_{\text{Selection Bias}} \\
    &\neq \mathbb{E}[Y(1) - Y(0) \mid D=1, X=x_i] + \underbrace{\mathbb{E}[Y(0) \mid D=1, X=x_i] - \mathbb{E}[Y(0) \mid D=0, X=x_i]}_{\text{Selection Bias}}
\end{align*}

Therefore, there could be no heterogeneity e.g. if $P(Y(1) - Y(0) \mid D=1, x=x_j) = P(Y(1) - Y(0) \mid D=1, x=x_i)$, but different degrees of selection biases across values of covariate will still result in the above inequality that manifests heterogeneity empirically through CFL. In other words, randomized treatment is a necessary condition for the validity of CFL in detecting heterogeneity.

\subsection{The Voting Dataset}
\label{sec:voting}

The Voting dataset, derived from the study by \cite{gerber2009does}, examines the ``persuasion effect" of offering a free subscription to the Washington Post (more liberal) or Washington Times (more conservative) on post-treatment political preference. In this randomized controlled trial, $n=3347$ individuals were assigned to receive either Washington Times ($Times_i=1$), Washington Post ($Post_i=1$), or control ($treat_i=0$). The outcome variable $Y_i$ we focused on is the broad policy index constructed by the authors to estimate the political attitude after the treatment, with higher index values indicating a more conservative stance. This dataset also includes a rich set of covariates encompassing demographic information, baseline political preferences, and historical voter turnout. \cite{jun2023identifying} utilized this dataset to explore persuasion effects, and we extend their analysis by applying the CFL algorithm to derive macrostates that may better capture the causal influence of the treatment on voting behavior.

\textbf{Clustering Analysis and Downstream Implementation:} As depicted in Figure \ref{fig:voting_clustersa}, the CFL algorithm identifies distinct clusters among the Voting dataset participants based on the same covariates in the analysis of \cite{gerber2009does}, including gender (\texttt{Bfemale}), age (\texttt{reportedage}), voting history (\texttt{Bvoted2001/02/04}), the source of sampling (\texttt{Bconsumer}), baseline political preference (\texttt{Bpreferrepub/dem}), and baseline exposure to similar types of papers (\texttt{Bgetsmag}). By excluding the treatment dummy, Figure \ref{fig:voting_clustersa} presents a coarsened covariate space constructed by above variables. 
% In other words, CFL reduces the dimension of the covariate space into clusters that preserve the more salient correlation between covariates and the outcome, which we want to control for, and reduces the noises induced by controlling for the remaining less salient correlation.

% \begin{figure}[t]
%     \centering
%     \includegraphics[width=0.8\linewidth]{./}
%     \caption{Clustering of Voting Dataset Participants Based on Demographics, Baseline Political Preference, and Historical Turnout Record}
%     \label{fig:voting_clustersb}
% \end{figure}

Table \ref{tab:cluster_effects} summarizes the treatment effects of receiving Washington Post and Times, using the same regression specification in the original paper except that all covariates are replaced by cluster indicators. While the original paper concluded that, though not significant, receiving either type of paper lead to more support liberal attitude (both effects are negative), results in Table \ref{tab:cluster_effects} show that receiving the more liberal paper lead to more liberal post-treatment political attitude, and receiving the more conservative paper lead to more conservative political attitude. Despite this difference, which signifies different parts of the variations in data that the constructed macrostates and the original microstates respectively account for, they are still very similar in relative magnitude and significance level of coefficients, especially in the case of receiving the more liberal paper.

\subsubsection{CFL as Dimensionality Reduction Technique:} 

\begin{lemma}
    Let $\mathcal{X}$ be the random vector for all relevant covariates such that the unconfoundedness assumption $D \perp\!\!\!\perp (Y(1),Y(0)) \mid X$ holds, then the observational coarsening, $M$, of the covariate space of $\mathcal{X}$ by CFL also satisfy $D \perp\!\!\!\perp Y(\cdot) \mid M$
\end{lemma}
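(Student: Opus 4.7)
The plan is to exploit the fact that $M=g(X)$ is a deterministic function of $X$—where $g$ is the partition map of the observational coarsening—and then transfer unconfoundedness from $X$ up to its coarsening $M$. The template I would follow is a sufficient-statistic argument: show that within each equivalence class of the partition, both conditional laws $\Prob(D\mid X)$ and $\Prob(Y(1),Y(0)\mid X)$ are constant, so that integrating out $X$ over a cluster preserves the product structure demanded by conditional independence.

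Concretely, I would fix $d$ and a measurable event $A$ for the pair $(Y(1),Y(0))$, and use the tower property together with the hypothesized unconfoundedness at the $X$-level to write
\[
\Prob(D=d,(Y(1),Y(0))\in A\mid M=m)=\E\!\left[\Prob(D=d\mid X)\,\Prob((Y(1),Y(0))\in A\mid X)\,\big|\,M=m\right].
\]
The aim is to pull both factors out of the expectation as $M$-measurable quantities. The hard step—and the main obstacle—is justifying that $\Prob(D\mid X)$ and $\Prob(Y(d)\mid X)$ are both constant on each cluster of $g$. The observational partition of Definition \ref{def:partitions} only guarantees $\Prob(Y\mid X)$ is constant on clusters, but in the presence of a treatment one has the mixture $\Prob(Y\mid X)=\sum_{d}\Prob(D=d\mid X)\,\Prob(Y(d)\mid X)$, and two points $x_1,x_2$ can share the marginal outcome law while differing in the individual mixands. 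I would resolve this by interpreting the CFL coarsening as it is actually trained in practice: the conditional density estimator is fit to the observable pair $(Y,D)$ given $X$, so the induced equivalence is really $\Prob(Y,D\mid x_1)=\Prob(Y,D\mid x_2)$. Marginalizing then gives $\Prob(D\mid X)$ constant on clusters, and combining this with the consistency identity $\Prob(Y\mid D=d,X)=\Prob(Y(d)\mid X)$ (valid under the $X$-level unconfoundedness) yields that $\Prob(Y(d)\mid X)$ is also constant on clusters; the joint $\Prob(Y(1),Y(0)\mid X)$ inherits this because both marginals do and the coupling is fixed by $X$.

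With both factors verified to be $M$-measurable, they coincide on the cluster $\{M=m\}$ with $\Prob(D=d\mid M=m)$ and $\Prob((Y(1),Y(0))\in A\mid M=m)$ respectively, and the expectation collapses to their product, establishing $D\perp\!\!\!\perp(Y(1),Y(0))\mid M$. I would append a short remark flagging that the argument genuinely needs the joint $(Y,D)\mid X$ interpretation of the observational coarsening rather than just $Y\mid X$; absent that, an explicit counterexample can be built in which two $X$-values with identical $\Prob(Y\mid X)$ but differing propensity scores are merged, breaking unconfoundedness at the $M$ level.
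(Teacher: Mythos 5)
Your proposal follows the same basic integration argument as the paper: condition on a cluster, use the $X$-level unconfoundedness to factor the law of $(D,Y(\cdot))$ given $X$, and collapse the average over the cluster because the relevant conditionals are constant there. Where you differ is at the step on which both arguments hinge. The paper simply asserts that within each equivalence class the conditional distribution of $Y(\cdot)$ given $X$ is constant, and proceeds; you correctly observe that this does not follow from the observational partition as defined (Definition~\ref{def:partitions}), since $P(Y\mid X)=\sum_d P(D=d\mid X)\,P(Y(d)\mid X)$ is a mixture, and two covariate values can share the observed-outcome law while differing in both the propensity score and the individual potential-outcome laws, in which case conditional independence genuinely fails at the $M$ level. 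Your repair --- reading the CFL coarsening as the partition induced by $P(Y,D\mid X)$ rather than $P(Y\mid X)$ --- closes that gap, so in this respect your argument is more careful than the paper's own. Two caveats. First, the repair is a substantive strengthening of the hypothesis rather than a reading the paper licenses: in the Voting application to which this lemma is attached, the clustering explicitly excludes the treatment dummy, so the partition used there is exactly the $P(Y\mid X)$ one for which your counterexample applies. Second, your closing claim that the joint law $P(Y(1),Y(0)\mid X)$ is constant on clusters ``because both marginals are and the coupling is fixed by $X$'' does not follow --- the coupling of the two potential outcomes given $X$ is neither identified nor determined by its marginals --- though this only matters if one insists on joint rather than coordinatewise independence, and the coordinatewise version suffices for the ATE decomposition the paper derives immediately afterwards.
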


\begin{proof}
With the unconfoundedness assumption, the independence of the potential outcomes, $(Y(1),Y(0))$, and the treatment $D$ conditional on all confounders $\mathcal{X}$, which is a vector of all relevant covariates, ensures the recovery of the average treatment effect. We can formally write the unconfoundedness assumption as:
\begin{align*}
    D \perp\!\!\!\perp (Y(1),Y(0)) \mid X
\end{align*}
% This conditional independence results in two properties:
% \begin{align*}
%     P(Y(\cdot) \mid D, X) &= P(Y(\cdot) \mid X) \\
%     \mathbb{E}[Y(\cdot) \mid D, X] &=\mathbb{E}[Y(\cdot) \mid X]
% \end{align*}
Recall that CFL conducts observational partition of the covariate space by the following equivalence relation: for any $x_i, x_j \in \mathcal{X}$:
\[
x_1 \sim x_2 \iff \forall y \in \mathcal{Y}, \ P\left(Y \mid X = x_i\right) = P\left(Y \mid X = x_j\right).
\]
Suppose $\{x_i\}_{i=1}^{\infty}$ is the sequence of all possible values in $\mathcal{X}$, and $\{x_{ik}\}_{k=1}^{n}$ is a subsequence that includes all the representatives of $n$ equivalent classes. Define a new random vector $M$ such that $M = \begin{bmatrix}
\1[x \in [x_{i1}]] & \1[x \in [x_{i2}]] & \cdots & \1[x \in [x_{in-1}]]
\end{bmatrix}^\top.$
This random vector is a coarsening of the original covariate space, and all the equivalence classes are mutually exclusive and jointly exhaustively. In other words, if $x \in [x_{i1}]$, then $\1[x \in [x_{i1}]] = 1$ and $\1[x \in [x_{ik}]] = 0$ for $k \neq 1$. Within each equivalence class $ [x_{ik}] $, the conditional distribution of $ Y(\cdot) $ given $ X $ is constant. This implies $Y(\cdot)$ is independent of $\mathcal{X}$ given $M$:
\begin{align*}
    P(Y(\cdot) \mid X, M) = P(Y(\cdot) \mid M) \Rightarrow Y(\cdot) \perp\!\!\!\perp X \mid M.
\end{align*}

By the law of total probability for conditional probability in continuous case: 
\begin{align*}
    P(Y(\cdot) \mid D, M) = \int P(Y(\cdot) \mid D, X, M) \, P(X \mid D, M) \, dX
\end{align*}
Since $Y(\cdot) \perp\!\!\!\perp D \mid X $, we have $P(Y(\cdot) \mid D, X, M) = P(Y(\cdot) \mid X, M) $. Since $Y(\cdot) \perp\!\!\!\perp X \mid M $, we have $ P(Y(\cdot) \mid X, M) = P(Y(\cdot) \mid M) $. Additionally, since $ P[Y(\cdot) \mid M] $ does not depend on $ X $, it can be factored out of the integral. Substituting all these into the integral:
% \begin{align*}
%     P(Y(\cdot) \mid D, M) = \int P(Y(\cdot) \mid M) \, P(X \mid D, M) \, dX
% \end{align*}
% Since $ P[Y(\cdot) \mid M] $ does not depend on $ X $, it can be factored out of the integral:
\begin{align*}
    P(Y(\cdot) \mid D, M) = P(Y(\cdot) \mid M) \int P(X \mid D, M) \, dX.
\end{align*}
    The integral $ \int P(X \mid D, M) \, dX $ equals 1 by the definition of probability distributions. Therefore, this shows that $ Y(\cdot) $ is independent of $ D $ given $ M $:
\begin{align*}
    P(Y(\cdot) \mid D, M) = P(Y(\cdot) \mid M) \Rightarrow  D \perp\!\!\!\perp Y(\cdot) \mid M.
\end{align*}
\end{proof}

Therefore, the coarsening of the covariate space does not affect the conditional independence between the outcome and treatment, which then ensures that selection bias disappears:
\begin{align*}
    &\mathbb{E}[\mathbb{E}[Y \mid D = 1, M] - \mathbb{E}[Y \mid D = 0, M]] \\
    =& \mathbb{E}[\underbrace{\mathbb{E}[Y(1)-Y(0) \mid D=1, M]}_{\text{ATT}}]
    + \mathbb{E}[\underbrace{\mathbb{E}[Y(0) \mid D = 1, M] - \mathbb{E}[Y(0) \mid D = 0, M]}_{\mathbb{E}[Y(0) \mid M]-\mathbb{E}[Y(0) \mid M] = 0}] \\
    =& \underbrace{\mathbb{E}[Y(1)-Y(0)\mid M]}_{\text{ATE}}
\end{align*} 
In other words, the treatment is still as if randomized after controlling for all the macrostates created by the CFL algorithm.

\begin{remark}
    Under the unconfoundedness assumption, the observational partition reduces the dimensionality of the covariate space while preserving the causal structure in the data.
\end{remark}
 Let there are $p$ continuous covariates in $\mathcal{X}$, which would have a dimension of $\mathbb{R}^p$. By assumption of CFL, we have finite macrostate, or equivalence classes in the original covariate space, so the coarsened covariate space, $\mathcal{M}$, will have a dimension of $2^{n-1}$ where $n$ is finite as each random variable in the random vector $M$ is binary.

\section{Conclusion}
\label{sec:conclusion}

Our work extends the theoretical foundation of CFL for macrostate construction. To handle continuous variables, we developed a principled binning strategy and introduced extended definitions of the observational and causal partitions. Our central theoretical contribution, the Extended Causal Coarsening Theorem, demonstrates that under mild conditions the causal partition is almost surely a coarsening of the observational partition—even when outcomes are discretized into a finite number of bins.  Our empirical analyses of social science datasets show that CFL-derived macrostates effectively reduce dimensionality, uncover heterogeneous treatment effects, and preserve essential causal structures.  

\section*{Acknowledgments}

We gratefully acknowledge the insightful feedback provided by our colleagues, particularly Frederick Eberhardt and Milan Moss\'e, whose suggestions significantly enhanced this work.  

\bibliography{references}
\bibliographystyle{plainnat}

\clearpage

\appendix

% \section{Proofs from Section \ref{sec:CFL-continuous}} 
\section{Proof of Theorem \ref{thm:cct-extended}}
\label{app:proof-sec-CFL-continuous}
To establish the Extended Causal Coarsening Theorem (CCT), we demonstrate that the set of joint distributions $P(X,Y,Z)$ violating the coarsening conditions has Lebesgue measure zero. This ensures that, almost surely, the causal partition $\Pi_c(\mathcal{X})$ is a coarsening of the observational partition $\Pi_o(\mathcal{X})$.
% , and similarly for $\Pi_c(\mathcal{Y})$ with respect to $\Pi_o(\mathcal{Y})$.
The proof is similar to the proof of Theorem \ref{thm:cct}.

We begin by defining the following probabilities:
\[
\begin{aligned}
\alpha_{k, x, z} &\triangleq P(a_k < y \le a_{k+1} \mid X = x, Z = z), \\
\beta_{x, z} &\triangleq P(X = x \mid Z = z), \\
\gamma_z &\triangleq P(Z = z).
\end{aligned}
\]
These definitions allow us to express the joint distribution as:
\[
P(X = x, a_k < y \le a_{k+1}, Z = z) = \gamma_z \cdot \beta_{x, z} \cdot \alpha_{k, x, z}.
\]

By fixing $\beta_{x, z}$, we determine the distribution of $\mathcal{X}$ given $\mathcal{Z}$, thereby fixing the confounding partition $\Pi_{P(X \mid Z)}(\mathcal{X})$. Similarly, fixing $\alpha_{k, x, z}$ specifies the causal relationship between $\mathcal{X}$ and $\mathcal{Y}$ given $\mathcal{Z}$, effectively fixing the causal partition $\Pi_c(\mathcal{X})$.

Next, consider the observational partition $\Pi_o(\mathcal{X})$, which groups together treatment variables $x_1$ and $x_2$ if and only if they induce the same distribution over $\mathcal{Y}$. Formally, $x_1 \sim x_2$ under $\Pi_o(\mathcal{X})$ if $\forall \text{ bin } (a_k,a_{k+1}]$:
\[
P(a_k < y \le a_{k+1} \mid x_1) = P(a_k < y \le a_{k+1} \mid x_2).
\]
Expanding $P(Y \mid X)$ using the law of total probability:
\begin{align*}
P(a_k < y \le a_{k+1} \mid X = x) &= \sum_{z} P(a_k < y \le a_{k+1} \mid X = x, Z = z) \cdot P(Z = z \mid X = x) \\
&= \sum_{z} P(a_k < y \le a_{k+1} \mid X = x, Z = z) \cdot \frac{P(X = x \mid Z = z) P(Z=z)}{P(X=x)} \\
&= \frac{\sum_{z} \gamma_z \cdot \beta_{x, z} \cdot \alpha_{k, x, z}}{\sum_{z} \gamma_{z} \cdot \beta_{x,z}}.
\end{align*}
Therefore, the equivalence condition $x_1 \sim x_2$ imposes:
\[
\frac{\sum_{z} \gamma_z \cdot \beta_{x_1, z} \cdot \alpha_{k, x_1, z}}{\sum_{z} \gamma_{z} \cdot \beta_{x_1,z}} = \frac{\sum_{z} \gamma_z \cdot \beta_{x_2, z} \cdot \alpha_{k, x_2, z}}{\sum_{z} \gamma_{z} \cdot \beta_{x_2,z}}.
\]
Rearranging terms leads to the polynomial constraints:
\[
\sum_{z_1,z_2} \gamma_{z_1}\gamma_{z_2} \left( \beta_{x_1, z_1} \cdot \alpha_{k, x_1, z_1} \cdot \beta_{x_2, z_2} - \beta_{x_2, z_1} \cdot \alpha_{k, x_2, z_1} \cdot \beta_{x_1,z_2} \right) = 0 \quad \forall \text{ bin } (a_k,a_{k+1}].
\]
These constraints define a system of polynomial equations in the parameters $\gamma_z$, $\beta_{x, z}$, and $\alpha_{k, x, z}$.

The set of joint distributions $P(X,Y,Z)$ that violate the condition $\Pi_c(\mathcal{X})$ being a coarsening of $\Pi_o(\mathcal{X})$ corresponds to the solutions of these polynomial equations. In the space of all possible joint distributions, these equations define an algebraic variety of lower dimension. 

We then verify that the polynomial constraints are non-trivial, i.e., not all $\{\gamma\}_{z \in \mathcal{Z}}$ satiafies these constraints. We first consider $\gamma_z = 1 / K$ for all $z$. If such $\{\gamma\}_{z \in \mathcal{Z}}$ do not satisfy the constraints, then the constraints are non-trivial. If the constraints are satisfied in this case, there exists $z_1^+,z_2^+$ such that $\left( \beta_{x_1, z_1^+} \cdot \alpha_{k, x_1, z_1^+} \cdot \beta_{x_2, z_2^+} - \beta_{x_2, z_1^+} \cdot \alpha_{k, x_2, z_1^+} \cdot \beta_{x_1,z_2^+} \right)$ is positive. And there also exists $z_1^-,z_2^-$ such that $\left( \beta_{x_1, z_1^-} \cdot \alpha_{k, x_1, z_1^-} \cdot \beta_{x_2, z_2^-} - \beta_{x_2, z_1^-} \cdot \alpha_{k, x_2, z_1^-} \cdot \beta_{x_1,z_2^-} \right)$ is negative. Without loss of generality, we assume that $z_1^+ \neq z_1^-$. Then we set $\gamma_z=1 / K$ for all $z \notin \{z_1^+, z_1^-\}$ and set $\gamma_{z_1^+}=\frac{3}{2K}$ and $\gamma_{z_1^-}=\frac{1}{2K}$. This is an example that violates the constraints.

Since the polynomial constraints are non-trivial, this variety has Lebesgue measure zero. Consequently, the subset of distributions $P(X,Y,Z)$ that do not satisfy the coarsening condition is of measure zero. Thus, with probability one (in the sense of Lebesgue measure), the causal partition $\Pi_c(\mathcal{X})$ is a coarsening of the observational partition $\Pi_o(\mathcal{X})$.
% , and similarly for $\Pi_c(\mathcal{Y})$ and $\Pi_o(\mathcal{Y})$. 

\section{Appendix: The Redlining Dataset}
\label{sec:redlining-data}

The Historic Redlining Indicator (HRI) is a measure of the mortgage investment risk of neighborhoods across the nation based on the residential security grades provided by The Home Owners' Loan Corporation (HOLC). A higher HRI score means greater redlining of the census tract. Using the 2010 and 2020 HRI datasets, we merged them with the US census dataset to investigate the impact of redlining on multiple socioeconomic outcomes.

\textbf{Clustering on Balanced Pseudo-Population after PSM:} Though randomized treatment is necessary for analysis using CFL, we can balance the data and create a pseudo-population using the propensity score matching. We define the treated group as those census tracts with an increase in the intensity of redlining, the HRI score, from 2010 to 2020, and untreated if the HRI score remains the same or decreases. Out of 11348 census tracts (an intersection of 2010 and 2020 census tracts), 487 are treated.  Currently, we choose two outcomes, changes in the proportion below the poverty line and changes in the median house value (in 2020 dollars), and 9 baseline covariates in 2010, including pubic school enrollment rate, average income, unemployment rate, the proportion of male, proportion of 3 racial groups (black, Asian, white), total housing units, and proportion of bachelor degrees, for the analysis. We used PSM with the nearest neighbor matching without replacement to create a pseudo-population such that each treated unit has a matched untreated unit (964 units). Figure \ref{fig:psm_balance} and \ref{fig:psm_propen_dis} present the balance of covariates and distribution of the propensity to treatment after propensity score matching. The estimated ATE and bootstrapped SE for two outcomes are 0.0028 [0.0128] (poverty rate) and -22599 [14928] (median house value). In other words, assume selection on observables, increase in redlining of a census tract increase the proportion of families below the poverty line, and lower the house value in the area, though both are insignificant.

After implementing CFL on the pseudo-population, there are two major observations. From Figure \ref{fig:redlining_3d}, we can see that, since the treatment is insignificant in both cases, CFL does not separate clusters primarily based on the treatment. However, CFL preserves the balance of treatment within each cluster. Though the distribution of the treated units and untreated units within each cluster is still similar to each other, the level of certain covariates is different between clusters, e.g. some are more skewed/sparse than others, indicating that different compositions of covariates across clusters are still related to the level of the outcome as shown in figure \ref{fig:cluster-comparison}.

\section{Appendix: Implementation of Binning Technique}
\label{sec:binning-implementation}

% \begin{figure}[b]
%     \centering
%     \includegraphics[scale=.3]{./}
%     \caption{Binning with the NSW Example: 4 clusters and 10 bins}
%     \label{fig:binning_nsw}
% \end{figure}

Here, we implement quantile binning, which is also known as equal frequency binning, in the example of the NSW dataset. We first apply an arbitrary number of bins, for example, 10 bins, by assigning a value to all observations in the same bin. We repeat the same CFL algorithm with the only change being the discretization of the outcome variable. Note that the variable is only discretized for the training of the CFL algorithm and construction of the macrostates; the original values of the variable should be retained for the rest of the analyzes. 

When the number of clusters is two, we almost replicate the distribution as in Figure \ref{fig:heterogeneity_clusters}, indicating that the amount of information on the causal relationship between the treatment, covariate, and outcome is not significantly compromised when we have a small number of clusters. When we increase the number of clusters to four, the results are less replicable as the CFL algorithm does not categorize a group of all untreated observations as in Figure \ref{fig:heterogeneity_clusters}. While we increase the number of clusters, during which the CFL algorithm creates more subgroups that capture more nuanced interaction between variables, a smaller number of bins eliminates some essential information entailed in a larger number of clusters. 

To verify this, we increase the number of bins and observe which bins we can obtain a distribution close enough to the one without the binning. If such a number of bins exist, it suggests that the same causal information relationship could be preserved while we discretize the variable and do not violate the assumption of discrete macrostate. As Table \ref{tab:min_perc_treated} shows, when we increase the number of bins, there are some cases where the CFL algorithm captures the group with only untreated units (value 0 in the table). As the algorithm includes randomness when constructing macrostates, the likelihood of successful construction of this particular macrostate increases as the number of bins increases.

\section{Appendix: Tables and Figures}

\setcounter{table}{0}
\setcounter{figure}{0}
\renewcommand{\thetable}{A\arabic{table}}
\renewcommand{\thefigure}{A\arabic{figure}}

\begin{table}[ht]
\centering
\caption{NSW Regression Results: Heterogeneity Identification}
\resizebox{0.6\linewidth}{!}{
\begin{tabular}{lcccc}
\toprule
\toprule
\textbf{Variable} & \textbf{Coef.} & \textbf{Std. Err.} & \textbf{t} & \textbf{P} $>$ $|\textbf{t}|$ \\
\midrule
Intercept        & 2790.36  & 476.31  & 5.85  & 0.00  \\
treat            & -409.66  & 742.27  & -0.55 & 0.58  \\
$\text{age}_{\text{dummy}}$      & -1670.13 & 721.93  & -2.31 & 0.02  \\
$\text{age}_{\text{dummy}}\times \text{treat}$ & 2889.34  & 1125.78 & 2.56  & 0.01  \\
\bottomrule
\bottomrule
\end{tabular}
}
\label{tab:regression_results}
\end{table}

\begin{table}[h]
    \centering
    \caption{Effects of Washington Post and Times for Different Clusters}
    \label{tab:cluster_effects}
    \resizebox{0.52\linewidth}{!}{%
    \begin{tabular}{c@{\hskip 0pt}*{2}{>{\centering\arraybackslash}p{4cm}}}
        \toprule
        \toprule
        \# Cluster & Post Effect & Times Effect \\
        \midrule
        3  & -0.0733 & 0.0112 \\
           & (0.0475) & (0.0469) \\
           & [0.1230] & [0.8118] \\ [2ex]
        4  & -0.0742 & 0.0007 \\
           & (0.0472) & (0.0465) \\
           & [0.1164] & [0.9880] \\ [2ex]
        6  & -0.0684 & -0.0022 \\
           & (0.0470) & (0.0465) \\
           & [0.1458] & [0.9618] \\ [2ex]
        8  & -0.0684 & 0.0090 \\
           & (0.0473) & (0.0467) \\
           & [0.1488] & [0.8478] \\ [2ex]
        12 & -0.0646 & 0.0018 \\
           & (0.0474) & (0.0468) \\
           & [0.1727] & [0.9688] \\
        \bottomrule
        \bottomrule
    \end{tabular}
    }
    \parbox{0.52\textwidth}{%
        \footnotesize 
         P-values are presented in square brackets and standard errors are presented in parentheses.}
\end{table}

\begin{table}[h]
    \centering
    \caption{Minimum Percentage of Treated Units in a Cluster by Bins}
    \resizebox{0.38\linewidth}{!}{%
    \begin{tabular}{c@{\hskip 0pt}*{2}{>{\centering\arraybackslash}p{5cm}}}
        \toprule
        \toprule
        \textbf{\# Bins} & \textbf{Min \% of Treated} \\ 
        \midrule
        10  & 0.321  \\
        27  & 0.129  \\
        45  & 0.336  \\
        62  & 0.230  \\
        88  & 0.292  \\
        131 & 0.198  \\
        174 & 0.320  \\
        218 & 0.000  \\
        262 & 0.311  \\
        305 & 0.102  \\
        348 & 0.207  \\
        392 & 0.327  \\
        435 & 0.000  \\
        \bottomrule
        \bottomrule
    \end{tabular}}
    \label{tab:min_perc_treated}
\end{table}

\begin{figure}[h]
\caption{Clustering of Voting Dataset Participants Based on Demographics, Baseline Political Preference, and Historical Turnout Record}
    \centering
    \includegraphics[width=0.8\linewidth]{./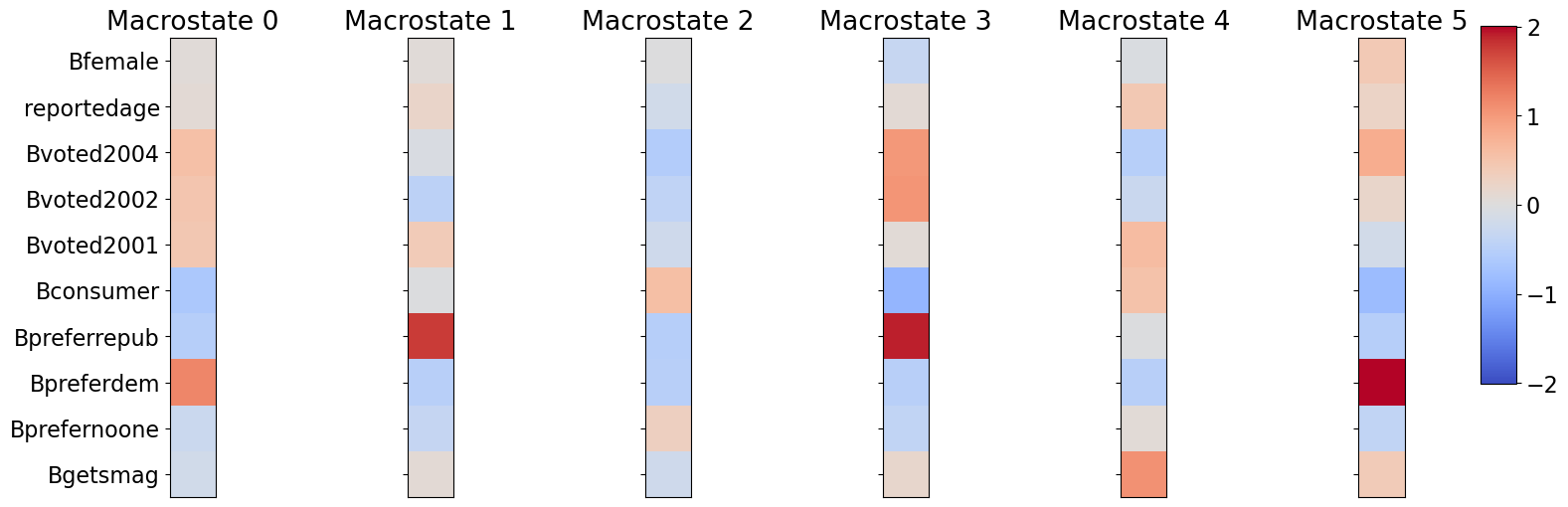}
    \label{fig:voting_clustersa}
\end{figure}

\begin{figure}[h]
\caption{Clustering of Redlining Dataset}
    \centering
    \subfloat[House value clustering]{%
        \includegraphics[scale=0.33]{./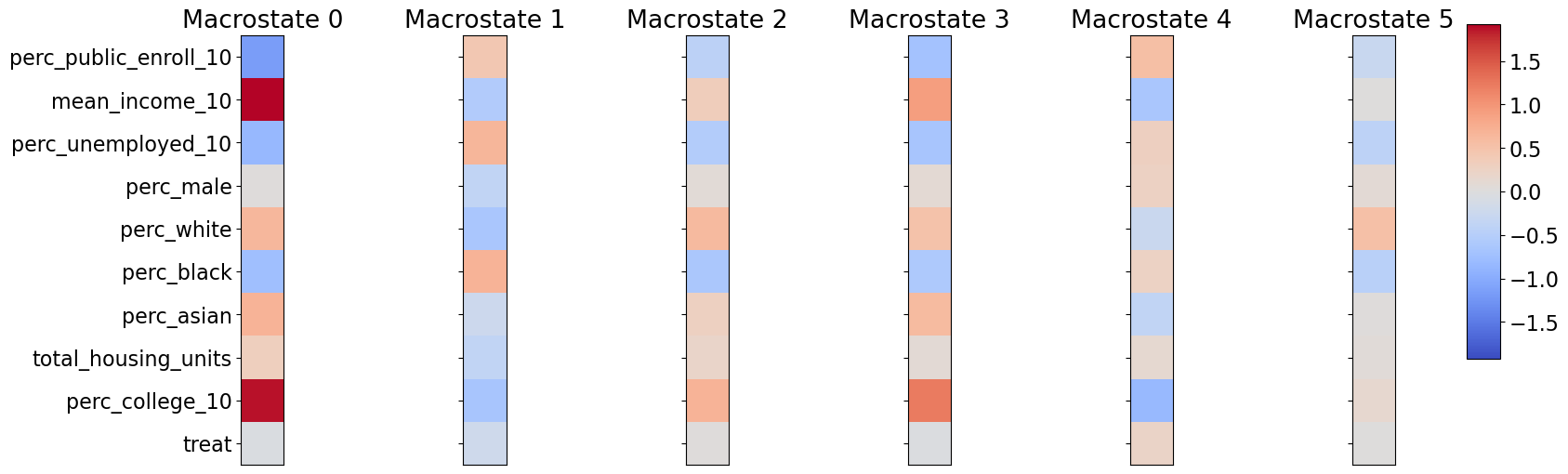}
        \label{fig:cluster-poverty}
    }
    \hspace{1cm} % Adjust horizontal spacing
    \subfloat[Poverty rate clustering]{%
        \includegraphics[scale=0.33]{./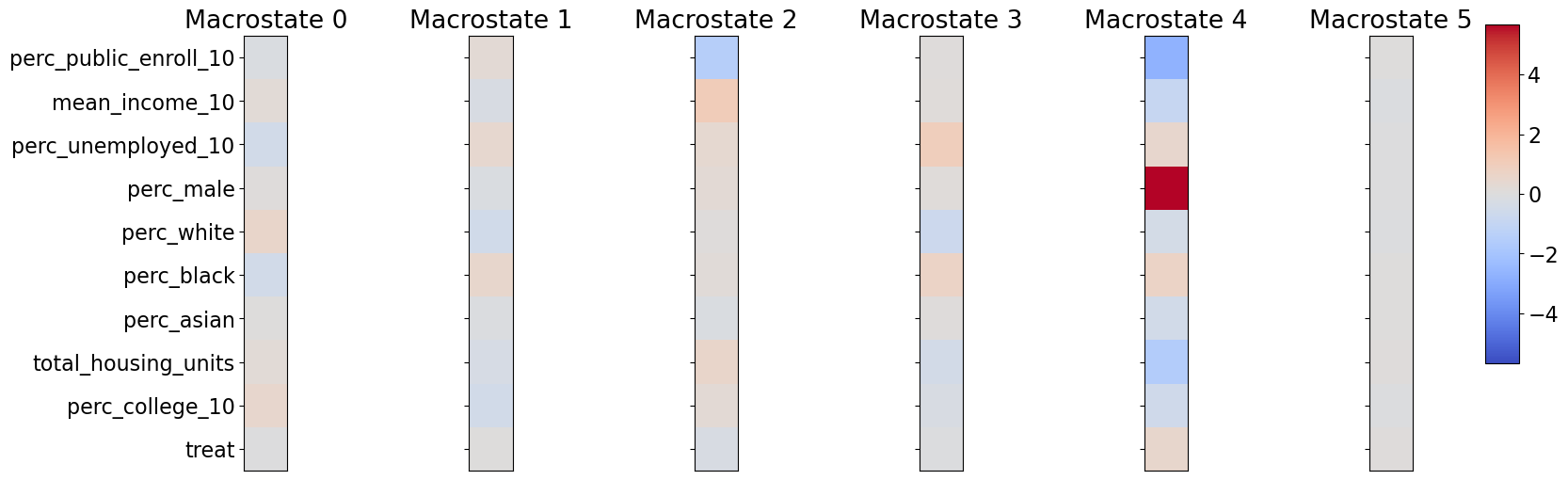}
        \label{fig:cluster-value}
    }
    \label{fig:cluster-comparison}
\end{figure}

\begin{figure}[h]
    \centering
    \begin{minipage}{0.58\textwidth}
        \centering
        \includegraphics[width=\linewidth]{./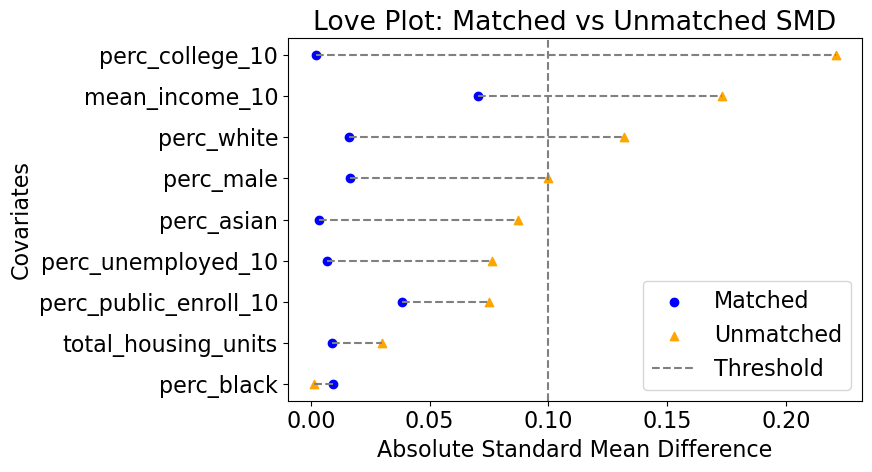}
        \caption{Balance Check}
        \label{fig:psm_balance}
    \end{minipage}
    \hfill
    \begin{minipage}{0.40\textwidth}
        \centering
        \includegraphics[width=\linewidth]{./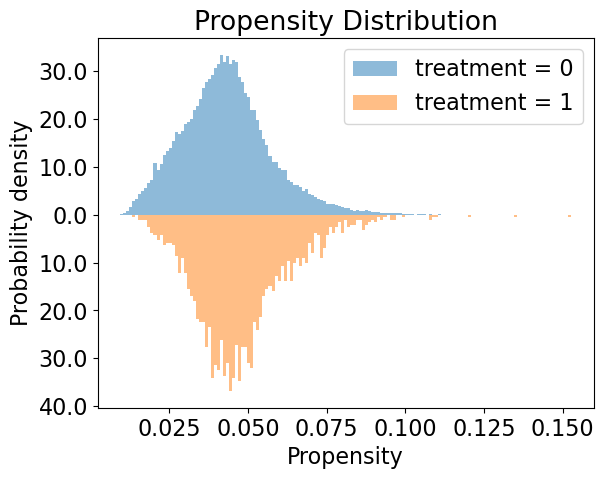}
        \caption{Propensity to Treatment}
        \label{fig:psm_propen_dis}
    \end{minipage}
\end{figure}

\begin{figure}[ht]
    \centering
    \includegraphics[scale=.26]{./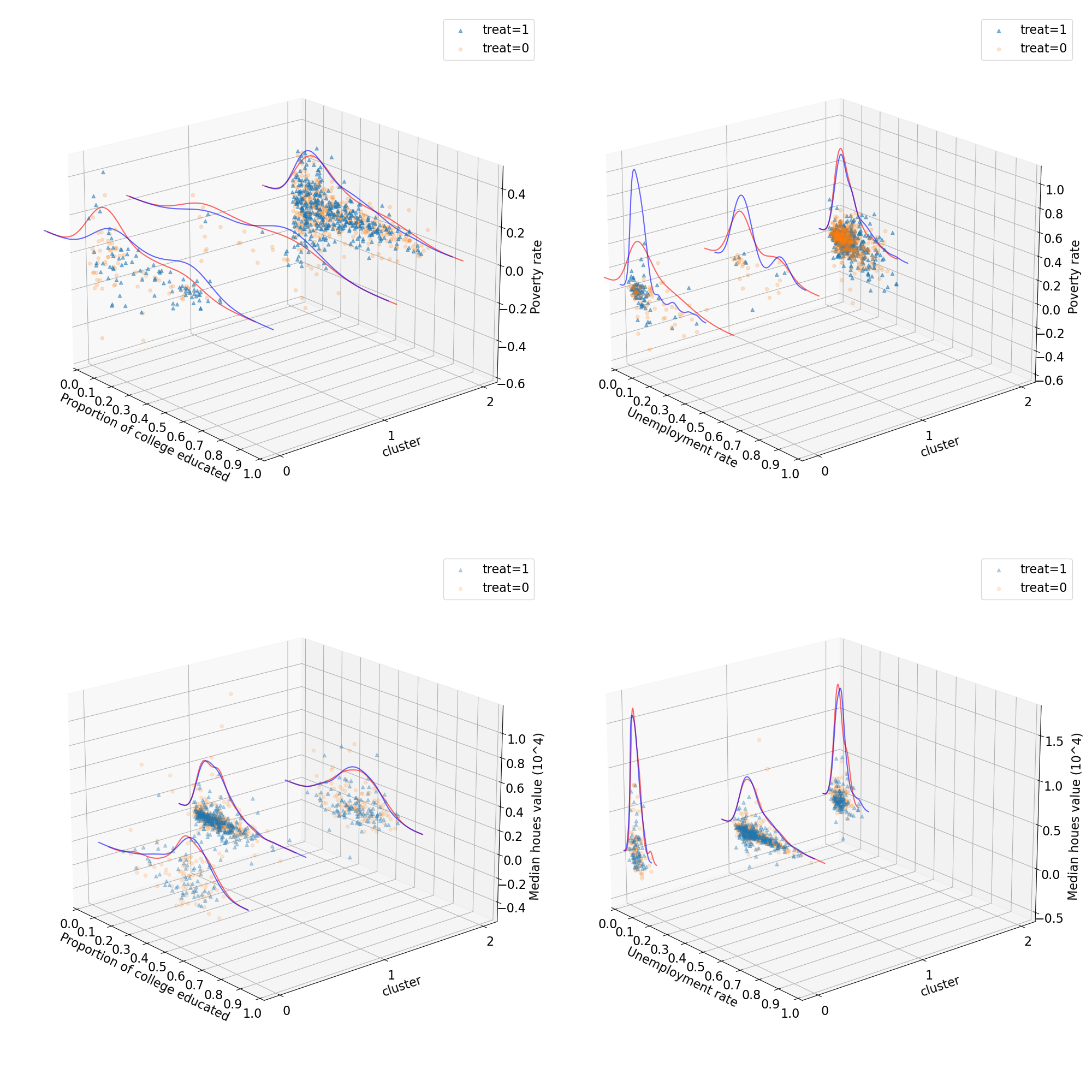}
    \caption{Distribution of Treated and Untreated Units
Across Clusters}
    \label{fig:redlining_3d}
\end{figure}

\end{document}